%%%%%%%%%%%%%%%%%%%%%%%%%%%%%%%%%%%%%%%%%%%%%%%%%%%%%%%%%%%%%%%%%%%%%%%%%%%%%%%%
%2345678901234567890123456789012345678901234567890123456789012345678901234567890
%        1         2         3         4         5         6         7         8

\documentclass[letterpaper, 10 pt, conference]{ieeeconf}  % Comment this line out if you need a4paper

\IEEEoverridecommandlockouts                              % This command is only needed if 
                                                          % you want to use the \thanks command

\overrideIEEEmargins                                      % Needed to meet printer requirements.

%In case you encounter the following error:
%Error 1010 The PDF file may be corrupt (unable to open PDF file) OR
%Error 1000 An error occurred while parsing a contents stream. Unable to analyze the PDF file.
%This is a known problem with pdfLaTeX conversion filter. The file cannot be opened with acrobat reader
%Please use one of the alternatives below to circumvent this error by uncommenting one or the other
%\pdfobjcompresslevel=0
%\pdfminorversion=4

% See the \addtolength command later in the file to balance the column lengths
% on the last page of the document

% The following packages can be found on http:\\www.ctan.org
%\usepackage{graphics} % for pdf, bitmapped graphics files
%\usepackage{epsfig} % for postscript graphics files
%\usepackage{mathptmx} % assumes new font selection scheme installed
%\usepackage{times} % assumes new font selection scheme installed
%\usepackage{amsmath} % assumes amsmath package installed
%\usepackage{amssymb}  % assumes amsmath package installed

%My added package%%%%%%%%%%%%%%%%%%%%%%%%%%%%%%%%%%%%%%%%%%%%%%%%%%%%%%%%%%%%%
\usepackage{amsfonts}
\usepackage{amstext}
\usepackage{mathtools}
\usepackage{cases}
\usepackage{enumerate}
\usepackage{subfigure}
\usepackage{amssymb}
\usepackage{float}
\usepackage{comment}
\usepackage[dvipsnames]{xcolor}
\usepackage{hyperref}
\usepackage{matlab-prettifier}
\usepackage{multirow}

\usepackage{algorithm}
\usepackage{algpseudocode}
\usepackage{etoolbox}

% ---- Create a breakable algorithm environment ----
\makeatletter
\newenvironment{breakablealgorithm}
  {%
   \begin{center}
     \refstepcounter{algorithm}%
     \hrule height.8pt depth0pt \kern2pt
     \renewcommand{\caption}[2][]{%
       {\raggedright\textbf{\ALG@name~\thealgorithm} ##2\par}%
     }
  }
  {%
     \kern2pt\hrule\relax
   \end{center}
  }
\makeatother

\newtheorem{thm}{\textbf{\text{Theorem}}}
\newtheorem{lem}{\textbf{\text{Lemma}}}

\newtheorem{Definition}{\textbf{\text{Definition}}}

\newtheorem{rmk}{\textbf{\text{Remark}}}

\newcommand{\ie}{\textit{i.e.}}
\newcommand{\eg}{\textit{e.g.}}

\title{\LARGE \bf Nonlinear Observer Design for Visual-Inertial Odometry}
\author{Mouaad Boughellaba, Abdelhamid Tayebi, James R. Forbes, and Soulaimane Berkane  % <-this % stops a space
\thanks{This work was supported by the National Sciences and Engineering Research Council of Canada (NSERC), under the grants NSERC-DG RGPIN 2020-06270 and NSERC-DG RGPIN-2020-04759, and by Fonds de recherche du Qu\'ebec (FRQ).} 
\thanks{M. Boughellaba and A. Tayebi are with the Department of Electrical Engineering, Lakehead University, Thunder Bay, ON P7B 5E1, Canada \tt\small \{mboughel,atayebi\}@lakeheadu.ca.} 
\thanks{S. Berkane is with the Department of Computer Science and Engineering, University of Quebec in Outaouais, Gatineau, QC, Canada. {\tt\small Soulaimane.Berkane@uqo.ca}}
\thanks{James R. Forbes is with the Department of Mechanical Engineering, McGill University, Montreal, WC, Canada. \texttt{james.richard.forbes@mcgill.ca}}
}

\begin{document}

\maketitle
\thispagestyle{empty}
\pagestyle{empty}

%%%%%%%%%%%%%%%%%%%%%%%%%%%%%%%%%%%%%%%%%%%%%%%%%%%%%%%%%%%%%%%%%%%%%%%%%%%%%%%%
\begin{abstract}
This paper addresses the problem of Visual-Inertial Odometry (VIO) for rigid body systems evolving in three-dimensional space. We introduce a novel matrix Lie group structure, denoted  \(SE_{3+n}(3)\), that unifies the pose, gravity, linear velocity, and landmark positions within a consistent geometric framework tailored to the VIO problem. Building upon this formulation, we design an almost globally asymptotically stable nonlinear geometric observer that tightly integrates data from an Inertial Measurement Unit (IMU) and visual sensors. Unlike conventional Extended Kalman Filter (EKF)-based estimators that rely on local linearization and thus ensure only local convergence, the proposed observer achieves almost global stability through the decoupling of the rotational and translational dynamics. A globally exponentially stable Riccati-based translational observer along with an almost global input-to-state stable attitude observer are designed such that the overall cascaded observer enjoys almost global asymptotic stability.
This cascaded architecture guarantees robust and consistent estimation of the extended state, including orientation, position, velocity, gravity, and landmark positions, up to the VIO unobservable directions (\ie, a global translation and rotation about gravity). The effectiveness of the proposed scheme is demonstrated through numerical simulations as well as experimental validation on the \textit{EuRoC MAV} dataset, highlighting its robustness and suitability for real-world VIO applications.
\end{abstract}

\section{Introduction}
Accurate estimation of vehicles' motion in GPS-denied environments is a fundamental problem in robotics. Among the available approaches, Visual–Inertial Odometry (VIO) has emerged as a promising solution by tightly fusing information from a camera and an inertial measurement unit (IMU). While VIO shares similarities with Simultaneous Localization and Mapping (SLAM), its primary objective is different: VIO focuses on estimating the ego-motion of the sensing platform without explicitly building or maintaining a persistent map of the surrounding environment. At the same time, VIO can be viewed as an extension of Visual Odometry (VO), where inertial data for the IMU are incorporated to improve the estimation accuracy with respect to the pure vision-based methods. This makes VIO a compelling choice for real-time applications. Cameras provide rich geometric information about the environment but suffer from scale ambiguity and sensitivity to illumination changes. Pure IMU-based odometry (relying on model integration) provides state estimates that drift over time due to sensors noise and biases. By combining these complementary sensing modalities, VIO achieves robust, accurate, and drift-resilient state estimation \cite{Lim2020ARO}.

State-of-the-art solutions for VIO are predominantly based on two paradigms: filter-based approaches, typically using the extended Kalman filter (EKF), and optimization-based approaches, often realized through sliding-window optimization \cite{Gui_2015}. In EKF-based VIO, the system state is recursively propagated using IMU (gyro and accelerometer) measurements and corrected using visual observations. %This framework offers low computational complexity and real-time capability, making it attractive for resource-constrained platforms. 
However, filter-based methods rely on local linearizations, which can lead to inconsistencies due to the accumulation of linearization errors. In contrast, optimization-based VIO formulates the motion estimation as a nonlinear least-squares problem, where recent poses and landmark positions are jointly optimized within a sliding time window. This category achieves higher accuracy, but it is computationally more demanding. As a result, the choice between EKF-based and optimization-based frameworks often reflects a trade-off between computational efficiency and estimation accuracy.

In this work, we propose a novel nonlinear geometric estimation framework for Visual–Inertial Odometry that realizes strong theoretical stability guarantees beyond those of existing approaches. Specifically, we formulate the VIO problem on a newly introduced matrix Lie group, denoted  \(SE_{3+n}(3)\), which provides a unified representation of the system’s pose, velocity, gravity, and landmark positions within a consistent geometric structure. This formulation enables a design of an almost globally asymptotically stable observer that fuses IMU and visual measurements without relying on local linearization. The key idea is to exploit the natural geometry of the underlying state space to decouple the rotational and translational dynamics by introducing an auxiliary system related to the gravity vector. This leads to a cascaded observer structure. The translational subsystem is modeled as a linear time-varying system, for which a Riccati-based design ensures global exponential stability, while the rotational dynamics are governed by an almost globally input-to-state stable attitude observer. This design overcomes the limitations of conventional EKF- and optimization-based methods, which typically provide only local convergence guarantees or lack provable stability results. Through this development, we provide a rigorous geometric framework for VIO estimation that combines theoretical soundness with practical robustness, and whose effectiveness is demonstrated through both numerical simulations and experimental validation on the \textit{EuRoC MAV} dataset. The proposed observer estimates the system’s extended state of the system, including orientation, position, velocity, gravity, and landmark positions up to the unobservable directions of VIO. %Numerical simulations further demonstrate the consistency and resilience of the proposed approach, highlighting its potential for reliable deployment in real-world robotic navigation.

\section{RELATED WORK}
\subsection{EKF- and optimization-based VIO}
A landmark contribution in filter-based VIO was the Multi-State Constraint Kalman Filter (MSCKF) introduced by Mourikis et al. \cite{Mourikis_ICRA2007}. Unlike conventional EKF formulations that explicitly included 3D landmark positions in the state vector, the MSCKF introduced the idea of enforcing multi-view feature constraints without directly estimating landmark locations. This innovation drastically reduced the computational complexity from quadratic to linear in the number of visual features while maintaining estimator consistency, thereby enabling real-time operation on resource-constrained platforms. Following its introduction, several extensions of the MSCKF were proposed to enhance robustness and accuracy. For example, Mingyang \textit{et al.} \cite{Mingyang_2013} demonstrated that the MSCKF algorithm achieves better accuracy and consistency than EKF-based VIO algorithms, due to its less restrictive probabilistic assumptions and delayed linearization, and further introduced the MSCKF 2.0 algorithm, which incorporates a closed-form expression for the IMU error-state transition matrix and fixed linearization states to ensure proper observability, while simultaneously performing online camera-to-IMU calibration. Beyond the classical MSCKF framework, further innovations focused on addressing the limitations of global reference frames and improving robustness in practical scenarios. Robocentric formulations were introduced to represent the state with respect to a moving local frame rather than a fixed global frame, improving numerical stability and estimator consistency during aggressive maneuvers \cite{Huai_IJRR2022,Huai_IROS2018}. Around the same time, Bloesch et al. developed ROVIO \cite{Bloesch_IJRR2017,Bloesch_IROS2015}, an iterated EKF-based algorithm that departed from conventional feature-based approaches by employing a direct photometric error formulation, where image pixel intensities were integrated directly into the filter update. This approach reduced computational cost while maintaining robustness under challenging visual conditions.
More recently, Geneva \textit{et al.} introduced OpenVINS \cite{Geneva_ICRA2020}, a EKF-based framework that combines the MSCKF \cite{Mourikis_ICRA2007} with First-Estimates Jacobian (FEJ) treatments \cite{Huang_IJRR2010,Huang_ISER_2009} to achieve state-of-the-art performance. Its main contribution lies in offering a modular, open-source, and well-documented platform for EKF-based VIO research, providing a valuable baseline for comparing against optimization-based methods.

In addition to filtering-based approaches, optimization-based VIO methods have also emerged in the literature. A seminal contribution in this direction is OKVIS \cite{Leutenegger_OKVIS}, which introduced a tightly coupled, keyframe-based estimator capable of supporting both monocular and stereo configurations. Building upon advances in visual SLAM, Qin \textit{et al.} proposed VINS-Mono \cite{Qin_TOR2018}, which extended the sliding-window formulation with efficient marginalization, loop closure, and global pose graph optimization, enabling drift reduction and high accuracy in large-scale environments. Complementing these developments, Delmerico and Scaramuzza \cite{Scaramuzza_ICRA2018} benchmarked a wide range of state-of-the-art VIO systems and showed that algorithms optimized primarily for speed and CPU efficiency often suffered from reduced accuracy and struggled on challenging datasets or resource-limited hardware.

\subsection{Invariant and Equivariant Observers for VIO}
Over the past decade, a growing body of work has investigated invariant and equivariant observers as alternatives to classical EKF- and optimization-based VIO, motivated by their ability to exploit Lie group symmetries to improve consistency and robustness. The two principal families are the Invariant EKF (IEKF) \cite{barrau2018invariant} and the more recent Equivariant Filter (EqF) \cite{Pieter_TAC2023}. Barrau and Bonnabel introduced the extended special Euclidean group $SE_2(3)$, which enables an exact linearization of IMU error dynamics when biases are known, thereby overcoming limitations of the standard on-manifold EKF and the multiplicative EKF widely used in SLAM and VIO. Subsequent studies analyzed observability properties of IEKF-based formulations \cite{Zhang_RAL2017} and compared their performance to conventional EKF-based approaches. Building on these foundations, Wu \textit{et al.} \cite{Wu_IROS2017} combined the MSCKF framework with invariant filtering principles to design a consistent Invariant MSCKF, addressing the growing inconsistency observed in the original MSCKF under Monte Carlo simulations. Extensions have also been made using unscented filters on Lie groups \cite{Brossard_2018}, as well as coupling IEKF with  FEJ techniques to further enhance accuracy and consistency on real datasets such as TUM-VI \cite{Yang_RAL2022}.

Whereas the IEKF line of work has largely focused on refining error-state formulations within existing group structures, a complementary direction has emerged through the equivariant framework, which introduces novel Lie groups and filtering principles to systematically exploit measurement equivariance \cite{Pieter_TAC2023}. This approach aims to ensure that estimator outputs transform consistently under changes of the input reference frame. Recent advances have introduced novel Lie groups, such as $SE_n(m)$, which explicitly capture the symmetries inherent in SLAM and VIO problems \cite{barrau_arxiv2016}. Since the IEKF cannot always be directly applied to these higher-dimensional groups, the EqF framework provides a systematic approach to exploit measurement equivariance and mitigate linearization errors \cite{Pieter_TAC2023}. Building on this foundation, Van Goor \textit{et al.} \cite{Pieter_TR2023} extended these ideas to tightly coupled VIO, proposing a consistent equivariant filtering scheme that achieves lower linearization error in state propagation and employs higher-order equivariant output approximations than standard formulations. Their experimental evaluation on the EuRoC and UZH FPV datasets shows that the proposed method outperforms state-of-the-art VIO algorithms in both speed and accuracy. These results highlight that invariant and equivariant observers resolve theoretical consistency issues and yield tangible performance gains in practice. This line of work positions geometric filtering as a powerful paradigm that complements EKF- and optimization-based frameworks, paving the way toward more reliable VIO in challenging real-world scenarios.\\

The remainder of this paper is structured as follows: Section \ref{s2} provides some preliminaries and the notations used in this paper, also, it introduces the new Lie matrix group $SE_{3+n}(3)$. Section \ref{s3} defines the problem of visual–inertial odometry. Section \ref{s4} introduces the design of the proposed VIO observer with rigors observability analysis. Section~\ref{s6} reports simulation results, while Section~\ref{s7} presents experimental results based on the \textit{EuRoC MAV} dataset. Finally, Section~\ref{s8} concludes the paper.

\section{Preliminaries}\label{s2}
\subsection{Notations}
The sets of real numbers and the n-dimensional Euclidean space are denoted by $\mathbb{R}$ and $\mathbb{R}^n$, respectively. The set of unit vectors in $\mathbb{R}^n$ is defined as $\mathbb{S}^{n-1}:=\{x\in \mathbb{R}^n~|~x^\top x =1\}$. Given two matrices $A, B$ $\in \mathbb{R}^{m\times n}$, their Euclidean inner product is defined as $\langle \langle A,B \rangle \rangle=\text{tr}(A^\top B)$. The Euclidean norm of a vector $x \in \mathbb{R}^n$ is defined as $||x||=\sqrt{x^\top x}$, and the Frobenius norm of a matrix $A \in \mathbb{R}^{n\times n}$ is given by $||A||_F=\sqrt{\langle \langle A,A \rangle \rangle}$. The identity matrix is denoted by $I_n \in \mathbb{R}^{n \times n}$. The attitude of a rigid body is represented by a rotation matrix $R$ which belongs to the special orthogonal group $SO(3):= \{ R\in \mathbb{R}^{3\times 3} | \hspace{0.1cm}\text{det}(R)=1, R^\top R=I_3\}$. The tangent space of the compact manifold $SO(3)$ is given by $T_RSO(3):=\{R \hspace{0.1cm}\Omega \hspace{0.2cm} | \hspace{0.2cm} \Omega \in \mathfrak{so}(3)\}$, where $\mathfrak{so}(3):=\{ \Omega \in \mathbb{R}^{3\times 3} | \Omega^\top=-\Omega\}$ is the Lie algebra of the matrix Lie group $SO(3)$. The map $[.]_{\times}: \mathbb{R}^3 \rightarrow \mathfrak{so}(3)$ is defined such that $[x]_\times y=x \times y$, for any $x,y \in \mathbb{R}^3$, where $\times$ denotes the vector cross product on $\mathbb{R}^3$. The angle-axis parameterization of $SO(3)$, is given by $\mathcal{R}(\theta, v):=I_3+\sin\hspace{0.05cm}\theta \hspace{0.2cm}[v]^\times + (1-\cos\hspace{0.05cm}\theta)([v]^\times)^2$, where $v\in \mathbb{S}^2$ and  $\theta \in \mathbb{R}$ are the rotation axis and angle, respectively. The Kronecker product of two matrices $A$ and $B$ is denoted by $A \otimes B$. 
The \textit{vectorization map} \( ()^\vee : \mathbb{R}^{m \times n} \rightarrow \mathbb{R}^{mn} \) is defined by
\begin{equation}
    \left([x_1~x_2~\hdots~x_n]\right)^\vee = [x_1^\top, x_2^\top, \hdots, x_n^\top]^\top,
\end{equation}
where each \( x_i \in \mathbb{R}^m \) for \( i \in \{1, 2, \hdots, n\} \). Notice that the map \( ()^\vee \) stacks the columns of a matrix vertically into a single vector. The \textit{inverse map}, denoted \( ()^\wedge : \mathbb{R}^{mn} \rightarrow \mathbb{R}^{m \times n} \), is defined by
\begin{equation}
    \left([x_1^\top, x_2^\top, \hdots, x_n^\top]^\top\right)^\wedge = [x_1~x_2~\hdots~x_n],
\end{equation}
which reconstructs the original matrix by reshaping the stacked vector back into its column-wise form. Let $\pi : \mathbb{R}^{n} \setminus\{0\} \to \mathbb{R}^{n\times n}$ denote the orthogonal projection operator, which will be used throughout this paper, defined by  
\begin{equation}
    \pi(x) = I_n - \frac{x\,x^\top}{||x||^2}, 
    \label{eq:projection}
\end{equation}
where $x \in \mathbb{R}^n\setminus\{0\}$. The matrix $\pi(x)$ is an orthogonal projection that maps any nonzero vector in $\mathbb{R}^n$ onto the subspace orthogonal to $x$. 
It is bounded and positive semi-definite, satisfies $\pi(x)y = 0_{n\times 1}$ when $x$ and $y$ are collinear.

%The map $()^\vee: \mathbb{R}^{m \times n}\rightarrow \mathbb{R}^{mn}$ is defined as $\left([x_1~x_2~\hdots~x_n]\right)^\vee=[x_1^\top, x_2^\top, \hdots, x_n^\top]^\top$ where $x_i \in \mathbb{R}^m$ for each $i \in \{1, 2, \hdots, n\}$. The inverse map of $()^\vee$ which is given as $()^\wedge: \mathbb{R}^{mn}\rightarrow \mathbb{R}^{m \times n}$ is defined as $\left([x_1^\top, x_2^\top, \hdots, x_n^\top]^\top\right)^\wedge=[x_1~x_2~\hdots~x_n]$.\\ 
\subsection{Matrix Lie Group for VIO}
Inspired by \cite{barrau_arxiv2016}, we propose a matrix Lie group structure specifically suited to the VIO problem. In particular, we define the matrix Lie group $SE_{3+n}(3) \subset \mathbb{R}^{(6+n)\times(6+n)}$ as follows:
\begin{align}
    SE_{3+n}(3):= \{ &X=\mathcal{M}(R,x_1,x_2,x_3,\bold{x_L}):\nonumber\\
    &R\in SO(3), x_1, x_2, x_3\in \mathbb{R}^3, \bold{x_L} \in \mathbb{R}^{3 \times n}\},
\end{align}
where the map $\mathcal{M}: SO(3)\times \mathbb{R}^3\times \mathbb{R}^3\times \mathbb{R}^3\times \mathbb{R}^{3\times n} \rightarrow \mathbb{R}^{(6+n)\times(6+n)}$ is defined as follows:

\begin{equation*} \mathcal{M}(R,x_1,x_2,x_3,\bold{x_L}):=\left[\begin{array}{r|r} \begin{matrix} R \qquad \end{matrix} & \begin{matrix} x_1&x_2&x_3&\bold{x_L}\end{matrix}\\ \hline 0_{(3+n)\times 3}& I_{3+n}\qquad \end{array}\right]. \end{equation*}
This construction generalizes the classical special Euclidean group $SE(3)$ by augmenting it with additional translational blocks corresponding to inertial states (\eg, velocity, gravity direction) and landmark positions. The block structure is carefully chosen such that group operations naturally encode both the rigid-body rotation and the translations associated with the inertial and landmark components. It is straightforward to verify that $SE_{3+n}(3)$ indeed forms a matrix Lie group. The group inverse is obtained as $$X^{-1}=\mathcal{M}(R^\top, -R^\top x_1, -R^\top x_2, -R^\top x_3, -R^\top \bold{x_L}).$$
Furthermore, the group multiplication, defined as standard matrix multiplication, is closed, \ie, for any $X_1, X_2 \in SE_{3+n}(3)$, the product $X_1 X_2$ is also an element of $SE_{3+n}(3)$. The identity element is simply the identity matrix $I_{6+n}$, \ie, $X\, X^{-1}=X^{-1}\, X= I_{6+n}$, for every $X \in SE_{3+n}(3)$. The Lie algebra of  the matrix Lie group $SE_{3+n}(3)$, denoted by $\mathfrak{se}_{3+n}(3) \in \mathbb{R}^{(6+n)\times(6+n)}$, is given as follows:
\begin{align}
    \mathfrak{se}_{3+n}(3):= \{ &V=\mathcal{V}(\Omega,x_1,x_2,x_3,\bold{x_L}):\nonumber\\
    &\Omega\in \mathfrak{so}(3), x_1, x_2, x_3\in \mathbb{R}^3, \bold{x_L} \in \mathbb{R}^{3 \times n}\},
\end{align}
where the map $\mathcal{V}: \mathfrak{so}(3)\times \mathbb{R}^3\times \mathbb{R}^3\times \mathbb{R}^3\times \mathbb{R}^{3\times n} \rightarrow \mathbb{R}^{(6+n)\times(6+n)}$ is defined as follows:

\begin{equation*} \mathcal{V}(\Omega,x_1,x_2,x_3,\bold{x_L}):=\left[\begin{array}{r|r} \begin{matrix} \Omega \qquad \end{matrix} & \begin{matrix} x_1&x_2&x_3&\bold{x_L}\end{matrix}\\ \hline 0_{(3+n)\times 3}& 0_{3+n}\qquad \end{array}\right]. \end{equation*}
Here, $\Omega \in \mathfrak{so}(3)$ denotes a skew-symmetric matrix encoding the infinitesimal rotation, while $x_1, x_2, x_3,$ and $\bold{x_L}$ represent infinitesimal translations associated with the inertial and landmark states.
\begin{rmk}
  The matrix Lie group $SE_{3+n}(3)$ offers a compact and elegant representation that unifies rigid-body motions with additional translational states within a single algebraic framework. This formulation provides the foundation for developing our geometric observer for VIO in the subsequent sections.  
\end{rmk}

\subsection{Uniform Observability and Continuous Riccati Equation}
Consider the following linear time-varying (LTV) system:
\begin{align}
    \dot{\mathbf{x}}&=A(t)\,\mathbf{x}+B(t)\, \mathbf{u},\label{ltv_1}\\
    \mathbf{y}&= C(t)\, \mathbf{x},\label{ltv_2}
\end{align}
where \(\mathbf{x}(t)\in \mathbb{R}^{n \times n}\), \(\mathbf{u}(t) \in \mathbb{R}^\ell\), and \(\mathbf{y}(t) \in \mathbb{R}^m\) denote the state, input, and output vectors of the system, respectively. The matrices $A(t)$, $B(t)$, and $C(t)$, with appropriate dimensions, are assumed to be continuous and bounded for all \( t \geq 0 \). 
%The following definition presents a key concept associated with uniform observability.

It is well known that the ability to reconstruct the system state from input–output data is closely related to the observability properties of the system. For time-invariant systems, the classical notion of observability, established via the rank condition of the observability matrix, provides a convenient test. In contrast, for time-varying systems, observability must be formulated in a uniform sense, requiring that sufficient excitation be maintained over finite time intervals. This requirement is formalized by the following definition \cite{Chen1999}.  
\begin{Definition}
    The pair \( (A(t), C(t)) \) is said to be uniformly observable if there exist constants \( \delta > 0 \) and \( \mu > 0 \) such that $\forall t \geq 0$:
    \begin{equation}
    W(t, t + \delta) = \frac{1}{\delta} \int_t^{t + \delta} \Phi(\tau, t)^\top C(\tau)^\top C(\tau) \Phi(\tau, t) \, d\tau \geq \mu I_n,
    \end{equation}
    where \( \Phi(t, \tau) \) is the state transition matrix corresponding to \( A(t) \), defined as the solution to \(\frac{d}{dt}{\Phi}(\tau, t)=A(t)\, \Phi(\tau, t)\) with \(\Phi(t, t)=I_n\). The matrix $ W(t, t + \delta)$ is referred to as the \textit{Observability Gramian (OG)} of the system \eqref{ltv_1}–\eqref{ltv_2}. Note that the matrix \( W(t, t + \delta) \) is naturally upper bounded by some constant since the matrices \( A(t) \) and \( C(t) \) are assumed to be bounded for all $t\geq0$.
\end{Definition}

The observability property plays a fundamental role in the behavior of filtering and state estimation algorithms. In particular, it is directly connected to the solvability of the Continuous Riccati Equation (CRE), which arises as a key component in optimal filtering problems.  

\begin{Definition}
    The \emph{Continuous Riccati Equation (CRE)} is defined as
    \begin{align}
        \dot{P}(t) &= A(t)P(t) + P(t)A(t)^\top \nonumber\\
        &\quad - P(t)C(t)^\top Q^{-1}(t)C(t)P(t)+ V(t),\label{CRE}
    \end{align}
    where \( P(0) \in \mathbb{R}^{n \times n} \) is a symmetric positive-definite matrix, and \( V(t) \in \mathbb{R}^{n \times n} \), \( Q(t) \in \mathbb{R}^{m \times m} \) are uniformly positive-definite and bounded matrices.
\end{Definition}
The following classical result establishes sufficient conditions for the global existence, uniqueness, and boundedness of the solution to \eqref{CRE}.  

\begin{lem}[\!\!\cite{Bucy1967}]\label{lem:cre_sol}
     Suppose there exist constants \( \delta > 0 \), \( \mu_v > 0 \), and \( \mu_q > 0 \) such that for all \( t \geq 0 \):
    \begin{equation}\label{cre_v}
    \frac{1}{\delta} \int_t^{t + \delta} \Phi(t, \tau)^\top V(\tau) \Phi(t, \tau) \, d\tau \geq \mu_v I_n,
    \end{equation}
    \begin{equation}\label{cre_q}
    \frac{1}{\delta} \int_t^{t + \delta} \Phi(\tau, t)^\top C(\tau)^\top Q^{-1}(\tau) C(\tau) \Phi(\tau, t) \, d\tau \geq \mu_q I_n,
    \end{equation}
    then the solution \( P(t) \) of \eqref{CRE}, for all \( t \geq 0 \), satisfies the uniform bounds
    \[
        p_m I_n \leq P(t) \leq p_M I_n,
    \]
    for some constants \( 0 < p_m \leq p_M < \infty \).
\end{lem}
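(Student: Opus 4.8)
The plan is to establish the two-sided bound by treating the upper and lower estimates separately, and to observe that both reduce to a single underlying fact: uniform observability of a Riccati flow forces its solution to remain bounded above. The upper bound $P(t)\le p_M I_n$ will be obtained directly from the observability-type condition \eqref{cre_q}, whereas the lower bound $P(t)\ge p_m I_n$ will be obtained by passing to the information matrix $Y(t):=P(t)^{-1}$ and applying the very same argument to the dual Riccati equation satisfied by $Y$, with \eqref{cre_v} now playing the role of the observability condition. This mirrors the classical argument of \cite{Bucy1967}.

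For the upper bound I would first invoke the optimality (comparison) property of the Riccati flow: for any continuous bounded gain $K(\cdot)$, the solution $P_K(t)$ of the Lyapunov equation $\dot P_K=(A-KC)P_K+P_K(A-KC)^\top+V+KQK^\top$ with $P_K(0)=P(0)$ satisfies $P(t)\le P_K(t)$ for all $t\ge0$. This is verified by completing the square: the difference $D=P_K-P$ obeys a linear matrix differential equation with zero initial condition, driven by the positive semidefinite forcing $(K-PC^\top Q^{-1})\,Q\,(K-PC^\top Q^{-1})^\top$, so $D(t)\ge0$. It then remains to exhibit one admissible gain for which $P_K$ is uniformly bounded. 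Using \eqref{cre_q} the pair $(A,C)$ is uniformly observable, hence uniformly detectable, so one can construct a gain $K$ rendering the closed-loop transition matrix $\Phi_K(t,\tau)$ of $A-KC$ uniformly exponentially stable, i.e. $\|\Phi_K(t,\tau)\|\le c\,e^{-\lambda(t-\tau)}$. Expressing $P_K$ through the variation-of-constants formula and using the boundedness of $V$, $K$, and $Q$, both the homogeneous and forced terms are bounded, which yields $P(t)\le P_K(t)\le p_M I_n$.

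For the lower bound I would first argue that $P(t)$ remains symmetric positive definite for all $t\ge0$, so that $Y(t)=P(t)^{-1}$ is well defined; differentiating $P Y=I_n$ and substituting \eqref{CRE} shows that $Y$ solves the dual Riccati equation $\dot Y=-A^\top Y-YA+C^\top Q^{-1}C-YVY$. This is structurally identical to \eqref{CRE}, with dynamics matrix $-A^\top$, bounded additive term $C^\top Q^{-1}C$, and quadratic damping governed by $V$. Condition \eqref{cre_v} is precisely the uniform observability condition (up to the standard duality between $A$ and $-A^\top$) for this dual equation, with $V^{1/2}$ acting as the effective output matrix. Applying the upper-bound argument of the previous paragraph verbatim to $Y$ gives $Y(t)\le(1/p_m) I_n$, and inverting this inequality returns $P(t)\ge p_m I_n$.

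The main obstacle is the step that converts the integral Gramian lower bound \eqref{cre_q} into a stabilizing gain with a uniformly exponentially stable closed-loop transition matrix; this uniform-detectability argument is the technical heart of the proof and is where the excitation window $\delta$ and the lower bound $\mu_q$ enter quantitatively. A secondary point requiring care is verifying that $P(t)$ never loses positive definiteness, which legitimizes the change of variables $Y=P^{-1}$; this can be handled by noting that the lower-bound analysis for $Y$ itself certifies that $Y(t)$ stays finite, equivalently that $P(t)$ stays bounded away from singularity on every finite interval, thereby closing the argument.
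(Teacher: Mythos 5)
The paper offers no proof of this lemma at all: it is imported verbatim from \cite{Bucy1967} and used as a black box, so there is no in-paper argument to compare yours against. Your sketch is the standard classical (Kalman--Bucy) proof of this result: upper bound on $P$ by comparing the Riccati flow with the Lyapunov flow generated by an arbitrary suboptimal gain (completing the square to show $P\le P_K$), then a uniformly bounded $P_K$ obtained from a stabilizing gain whose existence follows from \eqref{cre_q}; lower bound by passing to $Y=P^{-1}$, whose dual Riccati equation is handled by the same argument with \eqref{cre_v} in the role of the observability condition. This is the right route and, in outline, correct.

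Two points need to be settled before the argument is complete. First, the step you yourself identify as the technical heart --- converting the Gramian bound \eqref{cre_q} into a gain $K$ with uniformly exponentially stable closed-loop transition matrix --- is where all the quantitative work lives and is not carried out; the classical way to avoid it entirely is to bound $P(t+\delta)$ directly by (essentially) the inverse of the observability Gramian plus a controllability-Gramian correction, which uses \eqref{cre_q} without ever constructing a stabilizing gain. Second, check the transposes in the duality step: the transition matrix of $-A^\top$ is $\Phi(t,\tau)^\top$, so the observability Gramian of the dual pair $(-A^\top, V^{1/2})$ is $\int_t^{t+\delta}\Phi(t,\tau)\,V(\tau)\,\Phi(t,\tau)^\top\,d\tau$, whereas \eqref{cre_v} as written involves $\Phi(t,\tau)^\top V(\tau)\Phi(t,\tau)$. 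The two integrands are congruent only by a $\tau$-dependent bounded matrix, so the integral bounds are not interchangeable without an extra argument (they do coincide trivially when $V$ is uniformly positive definite, which is the case actually used in the paper). Neither issue invalidates your approach, but both must be closed for the proof to stand on its own.
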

\begin{rmk}
    Conditions \eqref{cre_v} and \eqref{cre_q} are natural analogues of uniform controllability and uniform observability, respectively, and ensure that the solution \( P(t) \) of \eqref{CRE} remains well conditioned. In particular, \eqref{cre_v} and \eqref{cre_q} are satisfied if $(A(t), C(t))$ is uniformly observable and $V(t)$ and $Q(t)$ are uniformly positive definite. These properties guarantee that the Riccati equation does not degenerate, thus preventing the solution \( P(t) \) of \eqref{CRE} from either blowing up or collapsing to zero.
\end{rmk}
\begin{rmk}
    In the context of the Kalman filter, $V(t)$ and $Q(t)$ represent process and measurement noise covariances, respectively. Accordingly, Lemma~\ref{lem:cre_sol} provides a rigorous justification for the stability of the Kalman filter under persistent excitation and uniform observability.
\end{rmk}

\section{Problem Statement}\label{s3}
Let $\{\mathcal{I}\}$ and $\{\mathcal{B}\}$ be the inertial frame and the body-fixed frame attached to the center of mass of a rigid body, respectively. Consider the following dynamics of a rigid body and a set of $n$ static landmarks:
\begin{align}
    \dot{R}&=R[\omega^\mathcal{B}]_\times \label{equ:dynamics1}\\
    \dot{p}&=v\label{equ:dynamics12}\\
    \dot{v}&=g+
    Ra^\mathcal{B}\label{equ:dynamics11}\\
    %\dot{g}&=0\\
    \dot{p}_i&=0,\label{equ:dynamics4}
\end{align}
where $R \in SO(3)$ is the orientation of frame $\{\mathcal{B}\}$ with respect to frame $\{\mathcal{I}\}$, $p \in \mathbb{R}^3$ and $v \in \mathbb{R}^3$ denote the position and the linear velocity of the rigid body expressed in the inertial frame $\{\mathcal{I}\}$, $p_i \in \mathbb{R}^3$ is the position of the $i$-th landmark expressed in $\{\mathcal{I}\}$, $g\in \mathbb{R}^3$ is the acceleration due to gravity expressed in $\{\mathcal{I}\}$, $a^\mathcal{B} \in \mathbb{R}^3$ is the apparent acceleration capturing all non-gravitational forces applied to the rigid body expressed in $\{\mathcal{B}\}$, and $\omega^\mathcal{B}$ is the angular velocity of the rigid body expressed in $\{\mathcal{B}\}$. Figure \ref{VIO_diagram} illustrates the VIO configuration for an example with four landmarks.

\begin{figure}[h]
    \centering
\includegraphics[width=0.99\linewidth]{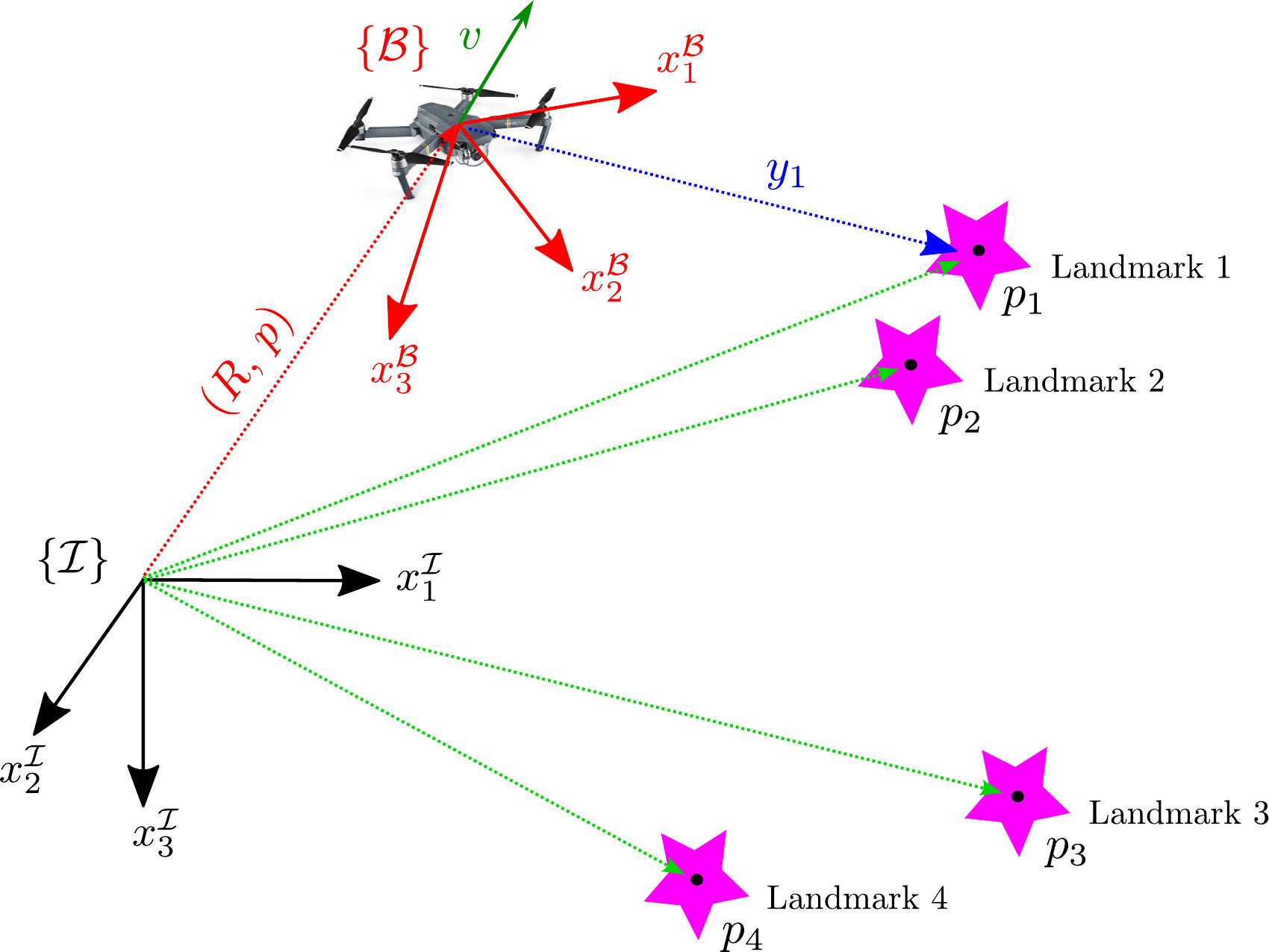}
    \caption{Rigid body navigation in 3D with four unknown landmarks. The goal is to estimate the vehicle’s extended pose (position, velocity, and orientation) based on landmark position estimates and available measurements.}
    \label{VIO_diagram}
\end{figure}

\subsection{System representation on a matrix Lie group} 

The system dynamics \eqref{equ:dynamics1}-\eqref{equ:dynamics4} can be captured by a state evolving on the Lie group $SE_{2+n}(3)$ since the gravity vector $g$ is constant and known. However, it is difficult to design an observer with global stability guarantees directly on this group due to the coupling resulting from the gravity direction when considering the group error. To remove this coupling, we extend the system with the additional (auxiliary) state $g$ which satisfies:
\begin{equation}
    \dot g=0\label{equ:dynamics2}.
\end{equation}
Although redundant from a physical standpoint, this auxiliary state eliminates the undesired coupling in the group error dynamics, thereby simplifying the observer design. The idea of introducing auxiliary states into the system to remove coupling in the error dynamics has, for example, been used in \cite{Wang_TAC2021} in the context of visual-inertial navigation with a known map. The resulting dynamics of \eqref{equ:dynamics1}-\eqref{equ:dynamics4} and \eqref{equ:dynamics2} can be captured by considering the extended state $X=\mathcal{M}(R,p,v,g,\bold{p_l})$, with $\bold{p_L}:=[p_1~p_2~\hdots~p_n]$, which evolves on the higher-dimensional Lie group $SE_{3+n}(3)$. Consequently, the system dynamics can be written compactly as
\begin{equation}\label{real_sys_on_group}
    \dot{X}=[X, H]+XV,
\end{equation}
where the group velocity $V=\mathcal{V}\left([\omega^\mathcal{B}]_\times, 0_{3\times 1}, a^\mathcal{B}, 0_{3\times 1}, 0_{3\times n}\right)$ is derived directly from IMU measurements. The operator $[.,.]$ denotes the matrix Lie bracket, defined as $[X_1,X_2]=X_1X_2-X_2X_1$ for any $X_1, X_2 \in \mathbb{R}^{(6+n)\times(6+n)}$. The constant matrix \(H\) is defined as
\begin{equation*} H=\left[\begin{array}{r|r} \begin{matrix} 0_3 \qquad \end{matrix} & \begin{matrix} 0_{3\times (3+n)}\end{matrix}\\ \hline 0_{(3+n)\times 3}& \qquad S \qquad \end{array}\right] \end{equation*}
where ~~~$S=\begin{bmatrix}
        0&0&0&\cdots&0&0\\
        1&0&0&\cdots&0&0\\
        0&1&0&\cdots&0&0\\
        0&0&0&\cdots&0&0\\
        \vdots&\vdots&\vdots&\ddots&\vdots&\vdots\\
          0&0&0&\cdots&0&0\\
    \end{bmatrix} \in \mathbb{R}^{(3+n)\times (3+n)}$.\\
    
\begin{rmk}
    In VIO, the IMU provides reliable short-term motion prediction, whereas visual measurements are essential for correcting accumulated drift. Within the group representation framework, this complementary relationship becomes explicit: the IMU determines the group velocity $V$, while visual observations constrain the system with respect to static landmarks.
\end{rmk}

\subsection{Measurement models} 
In this section, we introduce the measurement models associated with the landmarks, which play a central role in VIO since the measurements obtained from the camera determine the observable subspace of the system. Specifically, we consider three sensing modalities associated with the positions of the landmarks and the rigid body: relative position, relative stereo bearing, and relative monocular bearing. Each of these measurements is formulated using the matrix Lie group representation introduced in the previous section.\\

\subsubsection{Relative position measurements}
In this case, we assume that the relative position between the landmarks and the rigid body is measured in the body-fixed frame $\{\mathcal{B}\}$ and given as follows:
\begin{equation}\label{equ:measurements_3d}
    y_i=R^\top\left(p_i-p\right),
\end{equation}
where $i\in\{1, 2, \hdots, n\}$. Defining the vectors $\bold y_i:= [(y_i)^\top~r_i^\top]^\top \in \mathbb{R}^{6+n}$ and $\bold r_i := [0_{3\times 1}^\top~r_i^\top]^\top \in \mathbb{R}^{6+n}$ where $r_i:=[1~0~0~-e_i^\top]^\top \in \mathbb{R}^{3+n}$ with $e_i$ denotes the $i$-th basis vector of $\mathbb{R}^n$, one can rewrite the measurements \eqref{equ:measurements_3d} as
\begin{equation}\label{equ:measurements_3d_on_group}
    \bold y_i=X^{-1}\,\bold r_i.
\end{equation}
The measurement \eqref{equ:measurements_3d} corresponds to depth-capable sensors, such as RGB-D cameras, that directly provide relative position information. This type of measurement is particularly powerful because it removes the scale ambiguity inherent to bearing-only sensors. However, such measurements require active depth sensing hardware, which increases power consumption and is often limited to short ranges (\eg, indoor settings). From an observability perspective, direct position measurements offer stronger properties, as they reduce reliance on factors such as motion excitation to recover depth.\\ 

\subsubsection{Stereo-bearing measurements}
Let \((R_{c1}, p_{c1})\) and \((R_{c2}, p_{c2})\) denote the homogeneous transformations from the body frame \(\{\mathcal{B}\}\) to the right and left camera frames, \(\{\mathcal{C}_1\}\) and \(\{\mathcal{C}_2\}\), respectively. The stereo-bearing vector measurement corresponding to the \(i\)-th landmark, expressed in the camera frame \(\{\mathcal{C}_q\}\) with \(q \in \{1, 2\}\), is modeled as
\begin{equation}\label{equ:measurements_sb}
    \bar{y}_i^q=\frac{R^\top_{cs}(R^\top\left(p_i-p\right)-p_{cq})}{||R^\top\left(p_i-p\right)-p_{cq}||},
\end{equation}
where $i\in\{1, 2, \hdots, n\}$. 
Defining $\bar{\bold y}_i^q:= [(\bar{y}_i^q)^\top~0_{1\times(3+n)}]^\top \in \mathbb{R}^{6+n}$, one can verify that 
\begin{equation}\label{equ:measurements_sb_on_group}
    \bar{\bold y}_i^q=\frac{X_{cq}^{-1}\left(X^{-1}\,\bold r_i-\bold p_{cq}\right)}{||X_{cq}^{-1}\left(X^{-1}\,\bold r_i-\bold p_{cq}\right)||},
\end{equation}
where $X_{cq}:=\mathcal{M}(R_{cq},0_{3\times 1},0_{3\times 1},0_{3\times 1},0_{3\times n})$ and $\bold p_{cq}:=[p_{cq}^\top~r_i^\top]^\top$ with \(q \in \{1, 2\}\). Stereo-bearing measurements are widely used in VIO and SLAM due to their ability to recover depth through triangulation without requiring active depth sensors. Unlike relative position measurements, stereo-bearing measurements rely solely on passive cameras, which makes them well suited for outdoor and large-scale environments. Their accuracy, however, strongly depends on camera calibration, baseline length (\ie, the distance between the two camera centers that determines the effective triangulation geometry), and the presence of sufficient visual texture for reliable correspondence matching. From an observability perspective, stereo systems provide depth information as long as the baseline is non-degenerate, positioning them as an intermediate modality between full relative position measurements and monocular-bearing observations.\\

\subsubsection{Monocular-bearing measurements}
Let \((R_{c}, p_{c})\) denote the homogeneous transformations from the body frame \(\{\mathcal{B}\}\) to the camera frame \(\{\mathcal{C}\}\). The monocular-bearing vector measurement corresponding to the \(i\)th landmark, expressed in the camera frame \(\{\mathcal{C}\}\) is modeled as
\begin{equation}\label{equ:measurements_mb}
    \bar y_i=\frac{R^\top_{c}(R^\top\left(p_i-p\right)-p_{c})}{||R^\top\left(p_i-p\right)-p_{c}||},
\end{equation}
where $i\in\{1, 2, \hdots, n\}$. 
Defining $\bar{\bold y}_i:= [(\bar y_i)^\top~0_{1\times(3+n)}]^\top \in \mathbb{R}^{6+n}$, one has 
\begin{equation}\label{equ:measurements_mb_on_group}
    \bar{\bold y}_i=\frac{X_{c}^{-1}\left(X^{-1}\,\bold r_i-\bold p_{c}\right)}{||X_{c}^{-1}\left(X^{-1}\,\bold r_i-\bold p_{c}\right)||},
\end{equation}
where $X_{c}:=\mathcal{M}(R_{c},0_{3\times 1},0_{3\times 1},0_{3\times 1},0_{3\times n})$ and $\bold p_{c}:=[p_{c}^\top~r_i^\top]^\top$. Monocular-bearing measurements are the most common in vision-based navigation due to the ubiquity, low cost, and simplicity of single cameras. They are widely adopted in robotics and mobile devices since they require minimal hardware and can operate in a variety of environments. However, the key limitation of monocular sensing is the scale ambiguity: the absolute depth of landmarks cannot be inferred directly from a single image, which makes monocular systems dependent on additional information sources such as motion parallax. From an observability perspective, monocular bearings constrain only the direction of landmarks relative to the camera, while providing no direct constraint on the translation scale. This implies that the system is not fully observable without sufficient motion excitation. Thus, monocular-based VIO fundamentally relies on motion richness to achieve the desired performance. This is in contrast to stereo or depth sensors, which inherently resolve scale.\\

\begin{rmk}
    The bearing $\bar y_i$ of the $i$-th landmark is obtained from its pixel coordinates $(u_i, v_i)$ as  
   \begin{equation}
       \bar y_i = \frac{\mathcal{K}^{-1} z_i}{\|\mathcal{K}^{-1} z_i\|} \in \mathbb{S}^2,
   \end{equation}
   where $z_i = [u_i, v_i, 1]^\top$ and $\mathcal{K}$ denotes the camera intrinsic matrix\footnote{The camera intrinsic matrix $\mathcal{K}$ is a $3$-by-$3$ calibration matrix that encodes the internal parameters of the camera, including focal lengths, skew, and the principal point. It maps 3D points in normalized camera coordinates to their corresponding pixel coordinates in the image plane.}.
\end{rmk}

\begin{rmk}
 To ensure that the measurements in \eqref{equ:measurements_sb} and \eqref{equ:measurements_mb} are well defined for all time, we require that the camera center (given by the transformation $(R_{c}, p_{c})$ in the monocular case, and $(R_{cq}, p_{cq})$ with $q \in \{1,2\}$ in the stereo case) does not coincide with any landmark point $p_i$. This assumption guarantees that the denominator in \eqref{equ:measurements_sb} and \eqref{equ:measurements_mb} remains nonzero, thereby avoiding singularities in the measurement model. In practice, this assumption is naturally satisfied because cameras are rigid bodies with geometric dimensions rather than point masses.
\end{rmk}

\begin{rmk}
    Note that the three measurement types \eqref{equ:measurements_3d}, \eqref{equ:measurements_sb}, and \eqref{equ:measurements_mb} involve trade-offs between hardware complexity, computational cost, and observability: (i) relative position measurements (\eg, RGB-D) directly resolve depth and offer strong observability; (ii) stereo-bearing measurements balance hardware simplicity and depth inference; and (iii) monocular-bearing measurements are the most lightweight but introduce scale ambiguity. These distinctions highlight the importance of choosing the sensing modality according to the application requirements.
\end{rmk}

\subsection{Objective}
The objective is to design an observer that estimates the rigid body’s extended states (orientation, position, velocity, and gravity) together with the landmark positions, based on the dynamics \eqref{real_sys_on_group} and one of the measurement models \eqref{equ:measurements_3d_on_group}, \eqref{equ:measurements_sb_on_group}, or \eqref{equ:measurements_mb_on_group}. This corresponds to a VIO problem, where the goal is to estimate the rigid body pose relative to the landmarks while simultaneously estimating the landmark positions. It is well established that VIO is subject to fundamental observability limitations: the global position cannot be recovered (only relative motion is observable), and the orientation about the gravity direction remains unobservable.

\begin{rmk}
    The observability limitations are not consequences of the proposed formulation, but inherent to VIO itself. Any consistent estimator must respect these structural constraints. Attempting to estimate unobservable quantities typically leads to inconsistency and divergence in practical implementations.
\end{rmk}

\begin{rmk}
    Unlike SLAM, the objective of VIO is not to reconstruct a global map, but rather to maintain a consistent trajectory estimate. In this formulation, landmarks primarily serve as temporary anchors that support inertial odometry and can be discarded or marginalized as the system evolves.\\
\end{rmk}
 
 In the sequel, we propose an almost globally asymptotically stable observer that estimates the rigid body’s extended states (orientation, position, velocity, and gravity) together with the landmark positions, up to an unknown constant translation and a rotation about the $z$-axis. This ensures that the proposed scheme remains consistent, as it preserves the fundamental observability properties of the VIO problem.

\section{Observer Design}\label{s4}
In view of \eqref{real_sys_on_group}, we propose the following VIO observer structure on $SE_{3+n}(3)$ with a copy of the dynamics plus an innovation term:
\begin{equation} \label{observer_sys_on_group}
    \dot{\hat X}=[\hat X, H]+\hat X V+\Delta \hat X,
\end{equation}
where $\hat X=\mathcal{M}(\hat R,\hat p,\hat v,\hat g,\bold{\hat p}_{\bold L})$ is the estimated state, with $\bold{\hat p}_{\bold L}:=[\hat p_1~\hat p_2 \hdots \hat p_n]$, and  $\Delta \in \mathfrak{se}_{3+n}(3)$ is an innovation term that will be designated later. Note that $\hat R \in SO(3)$, $\hat p \in \mathbb{R}^3$, $\hat v \in \mathbb{R}^3$, $\hat g \in \mathbb{R}^3$ and $\hat p_i \in \mathbb{R}^3$ denote the estimates of $R$, $p$, $v$, $g$ and $p_i$, for each $i\in\{1, 2, \hdots, n\}$, respectively. Next, we derive the innovation term for each landmark measurement obtained from \eqref{equ:measurements_3d_on_group}, \eqref{equ:measurements_sb_on_group}, or \eqref{equ:measurements_mb_on_group}. For each $i \in \{1,\, 2, \,\cdots,\, n\}$, we define the following innovation term corresponding to the $i$-th landmark measurement:
%\textcolor{red}{one can verify that (verify what??!)} 
\begin{equation}
    \bold{\tilde y}_i =
    \begin{cases}
        \hat X^{-1}\bold r_i - \bold y_i, 
        & \text{(relative-position)}, \\[4pt]
        \displaystyle\sum_{q=1}^2\pi(X_{cq}\bar{\bold y}_i^q)
        (\hat X^{-1}\bold r_i - \bold p_{cq}),
        & \text{(stereo bearing)}, \\[6pt]
        \pi(X_c\bar{\bold y}_i)(\hat X^{-1}\bold r_i - \bold p_c),
        & \text{(monocular bearing)}.
    \end{cases}\nonumber
\end{equation}
Moreover, by exploiting the properties of the projection map $\pi(\cdot)$, the innovation term associated with each landmark measurement can be written in the following form
\begin{equation}\label{meas_innv}
    \bold{\tilde y}_i = \Xi_i \left(\hat X^{-1}\,\bold r_i-X^{-1}\,\bold r_i\right),
\end{equation}
where
\begin{equation}\label{Xi_i_equ}
    \Xi_i =
    \begin{cases}
        I_{6+n}, 
        & \text{(relative-position)}, \\[4pt]
        \displaystyle\sum_{q=1}^{2}\pi\!\left(X_{cq}\bar{\bold y}_i^{q}\right),
        & \text{(stereo bearing)}, \\[6pt]
        \pi\!\left(X_{c}\bar{\bold y}_i\right),
        & \text{(monocular bearing)}.
    \end{cases}
\end{equation}
The projection map $\pi(\cdot)$, appearing in the definition of $\bold{\tilde y}_i$, is widely used for processing bearing measurements obtained from cameras \cite{HAMEL2017137,Wang_TAC2021}. This projection is crucial because it preserves directional information while discarding unknown depth, allowing the observer to exploit geometric constraints for correction even when absolute distances are not available.

Let $Q := \left[I_3~~~0_{3\times(3+n)}\right]$. We introduce the following innovation term, constructed directly from the landmark-measurement innovation terms given in \eqref{meas_innv}:
\begin{equation}
    \sigma^p := \left(Q\,[\bold{\tilde y}_1,\,\bold{\tilde y}_2,\, \cdots,\, \bold{\tilde y}_n]\right)^\vee.
\end{equation}
In pursuit of the previously defined objective, we propose the following innovation term to be incorporated into the VIO observer dynamics \eqref{observer_sys_on_group}:
\begin{align}\label{gen_innov}
    \Delta=&\mathcal{V}(k_R[\sigma^R]_\times,\bold K_p(t)\,\sigma^p, \bold 
 K_v(t) \,\sigma^p, \bold K_g(t)\,\sigma^p, (\bold \Gamma(t) \,\sigma^p)^\wedge ),
\end{align}
where $\sigma^R \in \mathbb{R}^3$, $k_R \in \mathbb{R}$, \(\mathbf K_*(t)\in\mathbb{R}^{3\times 3n}\), and \(\boldsymbol\Gamma(t)\in\mathbb{R}^{3n\times 3n}\). The proposed innovation term $\Delta$ consists of two main correction components: $\sigma^R$, which corrects the rotational motion (its expression will be designed later), and $\sigma^p$, which incorporates the landmark-based residuals required for translational correction. It is also important to note that, although $\Delta$ lies in $\mathfrak{se}_{3+n}(3)$, its construction deviates from gradient based approaches relying on potential functions typically used on matrix Lie groups. 
%While such classical designs are common, they often lack strong stability guarantees \cite{Bonnabel_IEKFSLAM,Pieter_TR2023}({\color{red} I don't think that these references use a gradient based correcting term}). In contrast, the proposed method exhibits stronger stability properties, as will be rigorously demonstrated in the subsequent analysis.
The key idea behind designing $\Delta$ using two separate correction terms, $\sigma^R$ and $\sigma^p$, is to decouple the rotational motion from the translational motion, thereby allowing them to be treated as two cascaded subsystems. To see this, let us first derive the closed-loop system. Using the identity $\dot{\hat X}^{-1}=-\hat X^{-1} \dot{\hat X} \hat X^{-1}$ and defining the right-invariant geometric error $E=X \hat X^{-1}$, it follows from \eqref{real_sys_on_group} and \eqref{observer_sys_on_group} that
\begin{equation}\label{E_on_group}
    \dot E = [E, H]-E\Delta.
\end{equation}
Here $E=\mathcal{M}(\tilde R, \tilde p, \tilde v, \tilde g, \tilde{\bold{p}}_L)$ where $\tilde R:=R \hat R^T$, $\tilde p:=p-\tilde R \hat p$, $\tilde v:=v-\tilde R \hat v$, $\tilde g:=g-\tilde R \hat g$ and $\tilde{\bold{p}}_L:=[\tilde p_1~\tilde p_2~\hdots~\tilde p_n]$ with $\tilde p_i:=p_i-\tilde R \hat p_i$. 
Now, let the error state of the translational subsystem be defined as
\begin{equation}
    x=\left[(R^\top \tilde v)^\top, (R^\top \tilde g)^\top, \left((R^\top(\mathbf 1 \otimes \tilde p-\tilde{\mathbf{p}}_L))^\vee\right)^\top\right]^\top.
\end{equation}
The choice of the above translational error state $x$ is motivated by the observability properties of the VIO system. The available measurements do not render absolute position or yaw observable. Instead, only body-frame velocity and gravity errors, together with relative landmark positions expressed in the body frame, can be reconstructed. Therefore, we define the translational error state using these observable combinations, which lie in the observable subspace of the VIO system and are invariant to the unobservable global position and yaw direction. Moreover, the correction term associated with the translational motion can be rewritten as
\begin{align}
    \sigma^p = \bold C(t) \,x,
\end{align}
where 
\begin{align}
    \bold C(t)&=\begin{bmatrix}
        0_{3}&0_3&\Pi_1&0_3& \cdots&0_3\\
        0_{3}&0_3&0_3&\Pi_2& \cdots&0_3\\
        \vdots&\vdots&\vdots&\vdots&\ddots&\vdots\\
        0_{3}&0_3&0_3&0_3& \cdots&\Pi_n
    \end{bmatrix} \label{matrix:C}
\end{align}
with $\Pi_i:=Q\,\Xi_i\,Q^\top$ and $\Xi$ defined in \eqref{Xi_i_equ}.
Accordingly, we derive the following closed-loop dynamics for translational subsystem: 
\begin{equation}\label{ltv_cls}
    \dot x = \left(\bold A(t)-\bold L(t) \bold C(t)\right)x,
\end{equation}
where
\begin{align}
    \bold A(t)&=\begin{bmatrix}
        \bold D(t)&0_{6\times 3n}\\
        B_n \otimes I_3&I_n \otimes(-[\omega^\mathcal{B}]_\times)\\
    \end{bmatrix}\label{matrix:A}\\
    \bold L(t)&=\left(I_{n+2}\otimes \hat R^\top\right)\,\begin{bmatrix}
        \bold K_v(t)\\
        \bold K_g(t)\\
        \mathbf 1^\top \otimes \bold K_p(t)- \bold \Gamma(t)
    \end{bmatrix} \label{matrix:L}
\end{align}
with 
\begin{align}
    \bold D(t)=\begin{bmatrix}
        -[\omega^\mathcal{B}]_\times&I_3\\
        0_{3\times3}&-[\omega^\mathcal{B}]_\times
    \end{bmatrix}~\text{and}~B_n=\begin{bmatrix}
        \mathbf 1^\top & 0_{n \times 1}
    \end{bmatrix}.\label{B_definition}
\end{align} \\ \\
Furthermore, by designing $\sigma^R := \hat{g} \times g$,  we derive the following dynamics for the rotational subsystem 
\begin{align}
    \dot{\tilde R}&=\tilde R\left[-k^R \left(\tilde R^\top g \times g\right)-\Upsilon(t)x\right]_\times,\label{R_tilde}
\end{align}
where $\Upsilon(t):=\left[0_3~k^R g^\times \hat R^\top~0_3~\hdots~0_3\right] \in \mathbb{R}^{3 \times 3(n+2)}$. One checks that $||\Upsilon(t)||_F=\sqrt{2} k^R ||g||$. Note that the innovation term for the rotational subsystem $\sigma^R$ depends solely on a single vector, namely $g$ and its estimate $\hat g$. As a result, the attitude can only be estimated up to an arbitrary rotation about $g$ \cite{Mahony_TAC2008}. To facilitate the stability analysis of \eqref{R_tilde}, we therefore derive the following reduced attitude dynamics:
\begin{align}
        \dot{\breve g}&=k^R \left[ \breve g \times g \right]_\times \breve g-[\breve g]_\times \Upsilon(t) x, \label{g_tilde}
    \end{align}
    where $\breve g:= \tilde R^\top g$. The matrix $[\breve g]_\times \Upsilon(t)$ is bounded since $||\breve g||=||g||$ and $||\Upsilon(t)||_F=\sqrt{2} k^R ||g||$. Now, let us establish the stability properties of the proposed VIO observer \eqref{observer_sys_on_group}. Considering \eqref{ltv_cls} and \eqref{g_tilde}, one obtains the following overall closed-loop system: 
    \begin{align}
    \dot{\breve g}&=k^R \left[ \breve g \times g \right]_\times \breve g-[\breve g]_\times \Upsilon(t) x \label{csc_sys1}\\
    \dot x &= \left(\bold A(t)-\bold L(t) \bold C(t)\right)x.\label{csc_sys2}
\end{align}
The above closed-loop system can be seen as a cascaded nonlinear system evolving on $\mathbb{S}^2_g \times \mathbb{R}^{3\left(n+2\right)}$ where $\mathbb{S}_g^2:=\{u \in \mathbb{R}^3:~u^\top u=||g||^2\}$. Before establishing the stability properties of the closed-loop system \eqref{csc_sys1}–\eqref{csc_sys2}, we first introduce an instrumental lemma that will be used in the subsequent stability analysis.

\begin{lem}\label{lem_uo}
Consider the matrix $\bold A(t)$ defined in \eqref{matrix:A} and the matrix $\bold C(t)$ defined in \eqref{matrix:C}. The pair $(\bold A(t), \bold C(t))$ is uniformly observable if one of the following conditions holds:
\begin{itemize}
    \item Relative-position measurements \eqref{equ:measurements_3d} or stereo-bearing measurements \eqref{equ:measurements_sb} are available, with at least one landmark being observed.
    \item Monocular-bearing measurements \eqref{equ:measurements_mb} are available, with at least one landmark being observed, and for each monocular-bearing measurement there exist constants $\delta^*>0$ and $\mu^*>0$ such that
    \begin{align}
        \frac{1}{\delta^*}\int_{t}^{t+\delta^*} 
        \pi\!\left(R(\tau)\, R_c\, y_i(\tau)\right) d\tau 
        \;\geq\; \mu^* I_3,\label{ineq_uo}
    \end{align}
    for all $t \geq 0$.
\end{itemize}
\end{lem}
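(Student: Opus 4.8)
The plan is to verify the uniform lower bound on the Observability Gramian demanded by the definition of uniform observability, exploiting the block-triangular structure of $\mathbf A(t)$ in \eqref{matrix:A}. First I would pass to inertial-frame coordinates. Let $x(\tau)=\Phi(\tau,t)z$ be the solution of $\dot x=\mathbf A(t)x$ with $x(t)=z$, and factor out the body-frame rotation $R(\tau)^\top$ common to every block. The three parts of $x(\tau)$ then become $R(\tau)^\top\tilde v(\tau)$, $R(\tau)^\top\tilde g(\tau)$, and the $i$-th landmark part $R(\tau)^\top\zeta_i(\tau)$ with $\zeta_i:=\tilde p-\tilde p_i$; from \eqref{matrix:A} these inertial quantities satisfy the decoupled chain of integrators $\dot{\tilde g}=0$, $\dot{\tilde v}=\tilde g$, $\dot\zeta_i=\tilde v$, hence, with $s:=\tau-t$,
\begin{equation}
    \zeta_i(\tau)=\zeta_i(t)+s\,\tilde v(t)+\tfrac{s^2}{2}\,\tilde g(t). \nonumber
\end{equation}
Because $R(\tau)^\top$ is orthogonal, the $i$-th output contributes $\|\bar\Pi_i(\tau)\,\zeta_i(\tau)\|^2$ to $z^\top W(t,t+\delta)z$, where $\bar\Pi_i(\tau):=R(\tau)\,\Pi_i(\tau)\,R(\tau)^\top$ is the inertial-frame version of the block $\Pi_i=Q\,\Xi_i\,Q^\top$ built from \eqref{Xi_i_equ}. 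The whole problem thus reduces to lower-bounding, uniformly in $t$, the quadratic form in $(\zeta_i(t),\tilde v(t),\tilde g(t))$ obtained by integrating $\|\bar\Pi_i(\tau)\zeta_i(\tau)\|^2$ over the window.

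For the relative-position and stereo cases I would first establish $\bar\Pi_i(\tau)\ge\beta I_3$ for some uniform $\beta>0$: in the relative-position case $\Xi_i=I_{6+n}$ gives $\Pi_i=I_3$, while in the stereo case $\Xi_i$ is the sum of two orthogonal projections whose kernels are the two camera bearings, which are non-collinear because the stereo baseline is nonzero, so $\Pi_i$ is positive definite with a uniform lower bound. The integrand is then bounded below by $\beta\,\|\zeta_i(t)+s\tilde v(t)+\tfrac{s^2}{2}\tilde g(t)\|^2$, and integrating this against the three linearly independent monomials $1,\,s,\,\tfrac{s^2}{2}$ over $[0,\delta]$ produces their Gram (moment) matrix, which is positive definite with a lower bound depending only on $\delta$. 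Consequently a single observed landmark already recovers $(\tilde v,\tilde g)$ and its own $\zeta_i$ uniformly in $t$; summing over the observed landmarks yields $z^\top W(t,t+\delta)z\ge\mu\|z\|^2$, proving the first bullet.

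For the monocular case the projection $\Xi_i=\pi(X_c\bar{\mathbf y}_i)$ in \eqref{Xi_i_equ} is rank deficient, so $\bar\Pi_i(\tau)=\pi\!\left(R(\tau)R_c\bar y_i(\tau)\right)$ annihilates the instantaneous bearing direction and no pointwise lower bound exists; here the window integral is indispensable. Writing $\zeta_i(\tau)=\Theta\phi(s)$ with $\Theta:=[\zeta_i(t)\ \tilde v(t)\ \tilde g(t)]$ and $\phi(s):=[1\ s\ \tfrac{s^2}{2}]^\top$, and using $\|\pi(b)\Theta\phi\|^2=\operatorname{vec}(\Theta)^\top(\phi\phi^\top\otimes\pi(b))\operatorname{vec}(\Theta)$, the $i$-th contribution to the Gramian becomes the Kronecker-structured matrix
\begin{equation}
    G_i=\frac{1}{\delta}\int_t^{t+\delta}\phi(s)\phi(s)^\top\otimes\pi\!\left(R(\tau)R_c\bar y_i(\tau)\right)d\tau, \nonumber
\end{equation}
and it suffices to show $G_i\ge c\,I_9$ for some uniform $c>0$. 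This is exactly where the excitation hypothesis \eqref{ineq_uo} enters, since it makes the time-averaged inertial projection uniformly positive definite.

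I expect the positivity of $G_i$ to be the main obstacle. Unlike the stereo case, both the projection $\pi(\cdot)$ and the polynomial regressor $\phi(s)$ depend on $s$ and multiply the same unknown $\Theta$, so one cannot simply factor the moment matrix of $\phi$ out of the averaged projection. The plan is to argue by contradiction: $\operatorname{vec}(\Theta)^\top G_i\operatorname{vec}(\Theta)=0$ with $\Theta\ne0$ forces $\Theta\phi(s)$ to be collinear with the bearing $R(\tau)R_c\bar y_i(\tau)$ for almost every $\tau$ in the window, i.e. the bearing direction would coincide with the direction of the degree-two vector polynomial $\zeta_i(t)+s\tilde v(t)+\tfrac{s^2}{2}\tilde g(t)$. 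The delicate point is to show that this is incompatible with \eqref{ineq_uo}: a nonzero degree-two polynomial direction is asymptotically aligned with a fixed axis, so it cannot remain persistently exciting in the uniform ($\forall t\ge0$) sense demanded by \eqref{ineq_uo}, under which the averaged projection would otherwise drop below $\mu^* I_3$ on sufficiently shifted windows. Converting this qualitative incompatibility into a uniform lower bound $c>0$ for $G_i$ — extracting the constant jointly from the excitation constants $(\delta^*,\mu^*)$ and the $\delta$-dependent moment matrix of $\phi$, rather than from a mere nondegeneracy argument — is the crux of the proof; once it is secured, the recovery of $(\tilde v,\tilde g)$ proceeds exactly as in the previous case, completing the verification of uniform observability.
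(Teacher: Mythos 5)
The first bullet of your argument is essentially the paper's: conjugating by $I_{n+2}\otimes R^\top(\tau)$ reduces $(\mathbf A(t),\mathbf C(t))$ to a constant integrator chain multiplied by the block $\operatorname{blkdiag}(\Pi_1,\dots,\Pi_n)$, and once $\Pi_i\ge\beta I_3$ uniformly (trivial for relative positions; for stereo it requires the two camera bearings to stay uniformly non-collinear, i.e.\ bounded landmark depth relative to the baseline --- a caveat the paper also leaves implicit), your moment-matrix computation with $1,\,s,\,s^2/2$ is just an explicit form of the Kalman observability of the constant pair $(\bar{\mathbf A},\bar{\mathbf C})$ used in the paper. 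That part is fine.

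The genuine gap is in the monocular case, and it is not only the quantitative step you flag as ``the crux'' --- the qualitative contradiction itself fails at the level of a single window. Degeneracy of $G_i$ on $[t,t+\delta^*]$ forces the inertial bearing to be collinear with $\Theta\phi(s)$ there, but this is \emph{not} incompatible with \eqref{ineq_uo} holding on that window: if the camera translates at constant velocity past the landmark, the inertial camera-to-landmark vector is affine in $s$, so taking $\Theta=[\,\zeta_i(t)\ \ \tilde v(t)\ \ 0\,]$ equal to its value and slope gives $\pi\bigl(R(\tau)R_c\bar y_i(\tau)\bigr)\Theta\phi(s)\equiv 0$, while the bearing direction still sweeps a nondegenerate arc and the window-averaged projection can comfortably exceed $\mu^* I_3$. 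Hence any incompatibility must be extracted from windows longer than $\delta^*$ and from the fact that \eqref{ineq_uo} holds for \emph{all} $t\ge 0$, with the offending $\Theta$ allowed to depend on $t$; and it must then be made uniform in $t$. This is exactly the step the paper does not prove from scratch either: it factors $\mathbf C(t)=\mathbf P(t)\bar{\mathbf C}$, translates \eqref{ineq_uo} into $\frac{1}{\delta^*}\int_t^{t+\delta^*}\bar{\mathbf P}(\tau)\,d\tau\ge\mu^* I_{3n}$, and then invokes \cite[Lemma 2.7]{HAMEL2017137}, which combines this persistency condition with the idempotence of $\bar{\mathbf P}$ and the Kalman observability of $(\bar{\mathbf A},\bar{\mathbf C})$ to deliver the uniform Gramian bound. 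Without importing such a result, or replacing your per-window contradiction with a correct quantitative argument on suitably enlarged windows, the second bullet of the lemma remains unproven.
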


\begin{proof}
    See Appendix \ref{app_uo}
\end{proof}
The relative-position-based design represents the most favorable scenario, as the measurements directly provide the relative positions of landmarks. In this case, the innovation term is linear in the error state, with $\bold C(t)$ being constant since $\Pi_i = I_3$ for each $i\in{1, 2, \hdots, n}$. Consequently, the system’s observability is independent of the platform’s motion. The stereo configuration lies between the relative position and monocular cases. While each camera individually provides bearing measurements without scale, the baseline between the two cameras introduces parallax\footnote{Parallax is the apparent displacement of an object viewed from two different positions, used in stereo vision to calculate depth by measuring the shift in position between corresponding points in two camera images.}, which allows depth to be inferred and resolves scale ambiguity. The time-varying nature of $\bold C(t)$ reflects the changing projections of landmarks in both cameras, though uniform observability is still achieved without requiring persistent platform excitation. From a practical standpoint, stereo cameras serve as a lightweight alternative to depth sensors, offering sufficient geometric constraints for state estimation, though they have higher hardware complexity than monocular setups. Finally, the monocular case presents the greatest challenge, as bearing-only measurements cannot resolve depth. In this case, $\bold C(t)$ depends on the time-varying camera motion, and observability requires persistent excitation, as formalized by inequality~\eqref{ineq_uo}. System motion, such as translation orthogonal to the line of sight, generates parallax, which helps disambiguate scale and ensures sufficient information for estimation. This highlights a key trade-off: while monocular setups are lightweight and cost-effective, their ability to guarantee consistent estimation heavily depends on the richness of the motion.

With Lemma \ref{lem_uo} in place, we can now state the main result of this work.
\begin{thm}\label{thm:main}
    Assume that the matrices $Q(t)$ and \(V(t)\) are strictly positive definite. Let \(\bold L(t)=P(t) \bold C(t)^\top Q^{-1}(t)\) where \(P(t)\) is the solution of the CRE \eqref{CRE}. Then, for any \(k_R>0\) and provided that the pair $(\bold A(t), \bold C(t))$ is uniformly observable, the equilibrium $(\breve g, x)=(g, 0)$ of the system \eqref{csc_sys1}-\eqref{csc_sys2} is almost globally asymptotically stable\footnote{ The equilibrium point is said to be almost globally asymptotically stable if it is stable, and attractive from all initial conditions except a set of zero Lebesgue measure.}.
\end{thm}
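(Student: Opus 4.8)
The plan is to exploit the cascade structure of \eqref{csc_sys1}--\eqref{csc_sys2}: the translational subsystem \eqref{csc_sys2} is an autonomous linear time-varying system that drives the reduced attitude dynamics \eqref{csc_sys1} but is not itself driven by them. I would first establish global exponential stability (GES) of $x=0$, then almost-global asymptotic stability of $\breve g=g$ for the unforced reduced system together with an input-to-state property with respect to $x$, and finally combine the two via a cascade argument.

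For the translational subsystem, since $(\mathbf A(t),\mathbf C(t))$ is uniformly observable by Lemma~\ref{lem_uo} and $V(t),Q(t)$ are uniformly positive definite, Lemma~\ref{lem:cre_sol} yields the uniform bounds $p_m I\le P(t)\le p_M I$ on the CRE solution. With the Kalman gain $\mathbf L(t)=P(t)\mathbf C(t)^\top Q^{-1}(t)$, I would take $\mathcal W(x,t)=x^\top P^{-1}(t)\,x$ and differentiate along \eqref{csc_sys2}, substituting $\dot{P}^{-1}=-P^{-1}\dot P P^{-1}$ from the CRE \eqref{CRE}. The drift and gain terms cancel in the standard Kalman fashion, leaving $\dot{\mathcal W}=-x^\top\big(\mathbf C^\top Q^{-1}\mathbf C+P^{-1}VP^{-1}\big)x\le-\alpha\,\mathcal W$ for some $\alpha>0$ (using $V\ge\nu I$ and $P\le p_M I$). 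Hence $x=0$ is GES, so $x(t)\to0$ exponentially and $x\in\mathcal L_1$.

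For the reduced attitude system obtained by setting $x=0$, I would use the $\mathrm{BAC}$--$\mathrm{CAB}$ identity and $\|\breve g\|=\|g\|$ to rewrite $\dot{\breve g}=k_R\big(\|g\|^2 g-(g^\top\breve g)\breve g\big)$ on $\mathbb S^2_g$, whose only equilibria are $\breve g=\pm g$. With the potential $\mathcal V_g=\|g\|^2-g^\top\breve g$, a direct computation gives $\dot{\mathcal V}_g=-k_R\big(\|g\|^4-(g^\top\breve g)^2\big)\le 0$, vanishing only at $\breve g=\pm g$ by Cauchy--Schwarz. This shows $\breve g=g$ is asymptotically stable with basin $\mathbb S^2_g\setminus\{-g\}$ and, by linearization, locally exponentially stable, while $\breve g=-g$ is a hyperbolic repeller; hence the reduced system is almost globally asymptotically stable. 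The coupling term $-[\breve g]_\times\Upsilon(t)x$ in \eqref{csc_sys1} is bounded by $c\|x\|$ (since $\|\breve g\|=\|g\|$ and $\|\Upsilon(t)\|_F=\sqrt2\,k_R\|g\|$), so \eqref{csc_sys1} is input-to-state stable with respect to $x$ from every initial condition outside the stable set of the undesired equilibrium.

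Finally, I would conclude by a cascade argument. Trajectories of the full system are bounded (the sphere is compact and the GES $x$-subsystem admits no finite escape time), and since $x\in\mathcal L_1$ converges exponentially to zero, the ISS estimate forces $\breve g(t)\to g$ for all initial data not attracted to $(\breve g,x)=(-g,0)$. The main obstacle is the almost-global bookkeeping in this last step: unlike classical GAS cascade results of Panteley--Lor\'ia type, the topological obstruction on $\mathbb S^2$ precludes global attractivity, so I must show that the coupling does not enlarge the exceptional set. I would do this by characterizing the stable manifold of the hyperbolic equilibrium $(-g,0)$ via the stable manifold theorem and verifying it has zero Lebesgue measure, while simultaneously using the ISS property to rule out escape toward, or trapping near, $\breve g=-g$ induced by the forcing $\Upsilon(t)x$. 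Establishing that this measure-zero set is precisely the complement of the basin of $(g,0)$ is the delicate point of the argument.
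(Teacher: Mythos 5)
Your proposal follows essentially the same route as the paper: the same cascade decomposition, the same Riccati--Lyapunov function $x^\top P^{-1}x$ yielding global exponential stability of $x=0$, and the same spherical potential (your $\mathcal V_g=\|g\|^2-g^\top\breve g$ coincides with the paper's $\mathcal L_1=\tfrac12\|g-\breve g\|^2$ on $\mathbb S^2_g$) showing that $\breve g=-g$ is unstable and $\breve g=g$ is almost globally asymptotically stable for the unforced reduced dynamics. The one step you explicitly leave open --- showing that the forcing $-[\breve g]_\times\Upsilon(t)x$ does not enlarge the measure-zero exceptional set --- is where your sketch has a genuine gap: your proposed stable-manifold argument is stated for the equilibrium $(-g,0)$ of an autonomous system, but the coupled dynamics are time-varying through $\Upsilon(t)$ and the exogenous signal $x(t)$, so the off-the-shelf stable manifold theorem does not directly characterize the exceptional set of the forced system. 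The paper closes precisely this step by invoking the almost-global ISS framework of \cite{Angeli_TAC2011}: it verifies compactness of the state space under bounded inputs (condition A0), exhibits $\mathcal L_1$ as a Lyapunov function for the zero-input system (A1), shows the linearization at $-g$ has positive eigenvalues (A2), and establishes the ultimate-boundedness estimate $\dot{\mathcal L}_1\le-2k_R\|g\|^2\mathcal L_1+c_1+c_2\|x\|$, after which Propositions 2 and 3 of that reference deliver almost-global ISS with respect to $x$; combined with the exponential convergence of $x$, this yields AGAS of the cascade without any hand-built manifold argument. If you replace your final step with that citation (or reprove its content), the rest of your argument matches the paper's proof.
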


\begin{proof}
    See Appendix \ref{app_main_thr}
\end{proof}

The closed-loop system \eqref{csc_sys1}-\eqref{csc_sys2} consists of two cascaded subsystems: the first evolves on a reduced attitude dynamics manifold, $\mathbb{S}^2_g$, a three-dimensional (scaled) sphere, while the second evolves in Euclidean space. Due to the topological obstruction of the 3-dimensional sphere \cite{Markdahl_SCL2017}, the almost globally asymptotically stable result in Theorem \ref{thm:main} represents the strongest stability achievable with continuous design. Furthermore, it is important to emphasize that the proposed design for the VIO observer \eqref{observer_sys_on_group} is deterministic in nature, and the theoretical results in Theorem \ref{thm:main} remain valid for any uniformly positive definite matrices \( Q \) and \( V \). In other words, the convergence and stability guarantees do not rely on any probabilistic noise model. 

\begin{rmk}
    The observer matrix gains $\bold K_p(t)$, $\bold K_v(t)$, $\bold K_g(t)$, and $\bold \Gamma(t)$ are derived from the matrix $\bold L(t)$. Specifically, in view of \eqref{matrix:L}, one has
    \begin{align}
        &\bold K_v(t)=\begin{bmatrix}
        I_{3n} & 0_{3n\times3} & 0_{3n\times3}
        \end{bmatrix}\, \left(I_{n+2}\otimes \hat R\right)\,\bold L(t)\\
        &\bold K_g(t)=\begin{bmatrix}
        0_{3n\times3} & I_{3n} & 0_{3n\times3}
        \end{bmatrix}\, \left(I_{n+2}\otimes \hat R\right)\,\bold L(t)\\
        &\mathbf 1^\top \otimes \bold K_p(t) - \bold \Gamma(t) =\nonumber\\
        &~~~~~~~~~~\begin{bmatrix}
        0_{3n\times3} & 0_{3n\times3} & I_{3n}
        \end{bmatrix}\, \left(I_{n+2}\otimes \hat R\right)\,\bold L(t).\label{gain_kp_t}
    \end{align}
    Note that the gains $\bold K_v(t)$ and $\bold K_g(t)$ are uniquely determined by the value of $(I_{n+2}\otimes \hat R)\,\bold L(t)$, whereas $\bold K_p(t)$ and $\bold \Gamma(t)$ must be chosen to satisfy equation \eqref{gain_kp_t}. Consequently, multiple valid choices exist for $\bold K_p(t)$ and $\bold \Gamma(t)$. This non-uniqueness arises because the VIO system operates relative to the robot’s own frame of reference (ego-centric). In essence, the selection of $\bold K_p(t)$ and $\bold \Gamma(t)$ affects the estimated states but does not influence the closed-loop dynamics of the translational subsystem \eqref{ltv_cls}. A straightforward choice is to set
    \begin{align}
    &\bold K_p(t) = 0_{3\times 3n}\nonumber\\ 
    &\bold \Gamma(t) = - 
    \begin{bmatrix}
    0_{3n\times3} & 0_{3n\times3} & I_{3n}
    \end{bmatrix} \, \left(I_{n+2}\otimes \hat R\right)\,\bold L(t).\nonumber
    \end{align}
    Alternatively, one may exploit the static-environment property and determine the pair $(\bold K_p(t), \bold \Gamma(t))$ that minimizes an optimization criterion similar to that proposed in \cite{VANGOOR_aut_2021}.\\
\end{rmk}

In contrast to state-of-the-art EKF-based VIO estimators, such as MSCKF \cite{Mourikis_ICRA2007}, and more recent equivariant filters \cite{Pieter_TR2023}, which rely on local linearization and therefore guarantee only local convergence, the proposed VIO observer \eqref{observer_sys_on_group} achieves almost global asymptotic stability by decoupling the rotational and translational subsystems. Specifically, by introducing the additional (auxiliary) dynamics \eqref{equ:dynamics2}, the translational subsystem becomes a standalone linear time-varying system independent of attitude subsystem, for which a Riccati-based design yields a global exponentially stable translational observer. The rotational subsystem is treated separately, where we develop an almost globally input-to-state stable attitude observer. This cascaded structure enables the estimation of the rigid body’s extended state (orientation, position, velocity, and gravity), together with landmark positions, up to the unobservable directions inherent to VIO (a global translation and a rotation about the $z$-axis), while ensuring almost global stability. The main trade-off of our approach lies in the use of constant, and hence non-optimal (in the Kalman filter sense), gain for the attitude observer (\ie, the gain $k_R$), since the Riccati-based design applies only to the translational subsystem. Nonetheless, this trade-off is the cost of achieving a much stronger theoretical guarantee, providing stability results that go beyond the local convergence offered by EKF- and EqF-based VIO designs.

\subsection{Implementation notes}

To implement the proposed VIO observer \eqref{observer_sys_on_group}, one may directly perform the state propagation on the tangent space of the matrix Lie group $SE_{3+n}(3)$, which offers improved computational efficiency. However, for practitioners with limited experience in matrix Lie groups, we provide below an explicit formulation of the proposed observer \eqref{observer_sys_on_group}:
\begin{align}
      \dot{\hat R}&=\hat R[\omega^\mathcal{B}+\hat R^\top\sigma^R]_\times\label{equ:obr_dynamics1_TV}\\
      \dot{\hat p}&=[\sigma^R]_\times\hat p+\hat v+\hat R\sum_{j=1}^{n} K_j^p \sigma^p_j \label{equ:obr_dynamics2_TV}\\
    \dot{\hat v}&=[\sigma^R]_\times \hat v+\hat g+\hat R a^\mathcal{B}+\hat R \sum_{j=1}^{n} K_j^v\sigma^p_j\label{equ:obr_dynamics3_TV}\\
    \dot{\hat g}&=[\sigma^R]_\times\hat g+\hat R \sum_{j=1}^{n} K_j^g\sigma^p_j\label{equ:obr_dynamics4_TV}\\
    \dot{\hat p}_i&=[\sigma^R]_\times\hat p_i+\hat R \sum_{j=1}^{n} \Gamma_{ij}\sigma^p_j,\label{equ:obr_dynamics5_TV}
\end{align}
where $\sigma^p_j= Q\, \bold{\tilde y}_j$, \( K_j^*(t) \) is the $j$-th \(3 \times 3\) block of the matrix $\mathbf K_*(t)$, and $\Gamma_{ij}(t)$ is the $(i,j)$-th $3 \times 3$ block of the matrix $\boldsymbol{\Gamma}(t)$, for each $j \in \{1,\, 2, \,\cdots,\, n\}$. Figure \ref{obs_str} illustrates the structure of the proposed nonlinear observer for VIO system.
\begin{figure}[h]
  \centering
\includegraphics[width=1\linewidth]{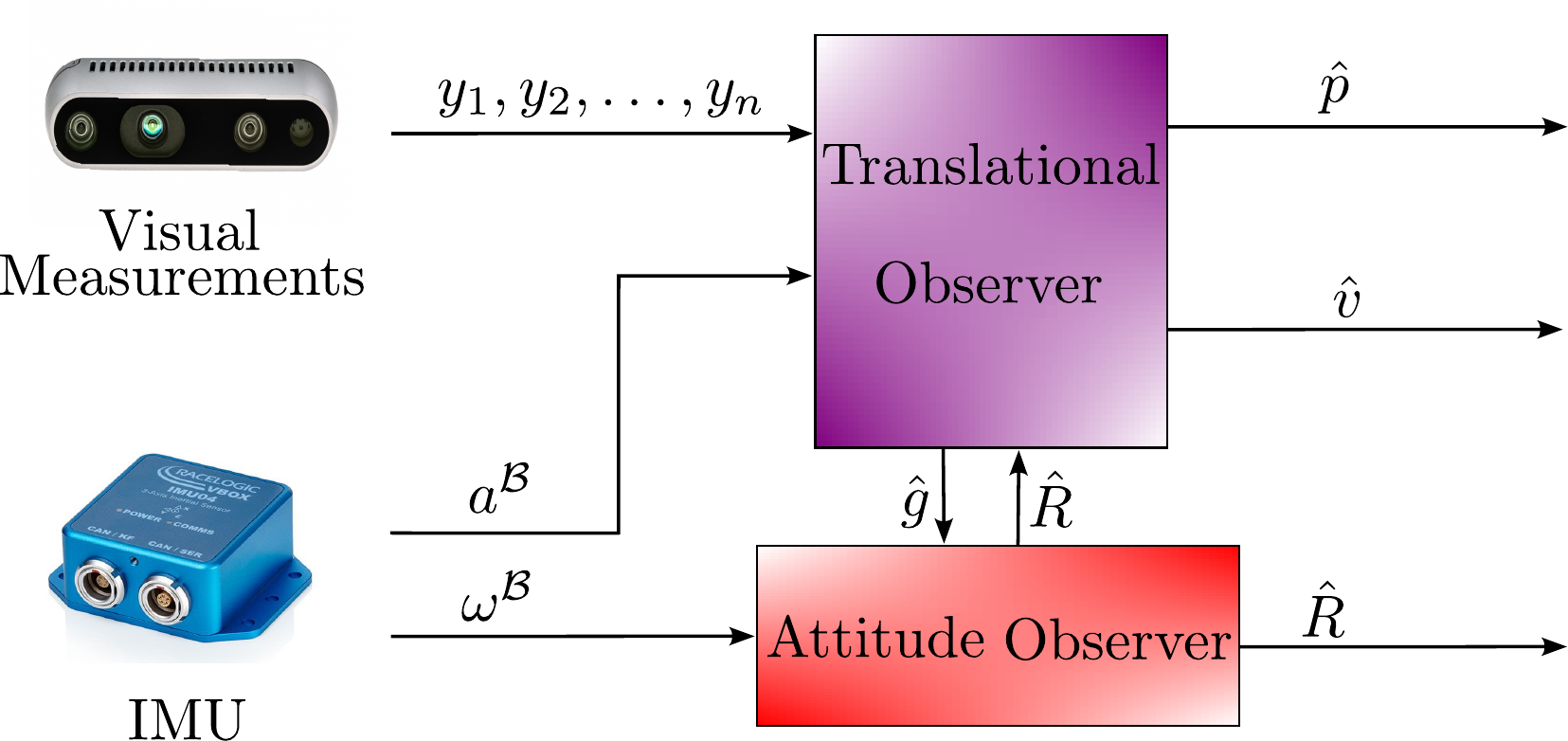}
  \caption{Structure of the proposed VIO observer.}
  \label{obs_str}
\end{figure}

In practical applications, IMU measurements are typically available at a high rate, whereas vision-based measurements are obtained much less frequently due to sensor hardware limitations. Consequently, IMU data can be treated as continuous, while vision measurements are considered intermittent samples. Following the framework of Wang et al. \cite{Wang_TAC2021}, we redesign our continuous nonlinear VIO observer \eqref{equ:obr_dynamics1_TV}–\eqref{equ:obr_dynamics5_TV} to handle continuous IMU inputs and intermittently sampled vision measurements, as summarized in Algorithm $1$.\\

\begin{breakablealgorithm}
\caption{Nonlinear observer for VIO}
\label{alg:observer}
\hrule\vspace{2mm}  % <-- horizontal line after the caption
\begin{algorithmic}[1]
\State \textbf{Input:} Continuous IMU measurements, and intermittent visual measurements at times $\{t_k\}_{k\in\mathbb{N}_{>0}}$.
\State \textbf{Output:} $\hat{R}(t), \hat{p}(t), \hat{v}(t), \hat{g}(t)$, and $\hat{p}_i(t)$ for all $t\ge0$.
\For{$k\ge1$}
  \While{$t\in[t_{k-1},t_k]$}
    \State $\dot{\hat R}= \hat{R}[\omega^\mathcal{B}+\hat{R}^\top\sigma^R]^\times$
    \State $\dot{\hat p}=[\sigma^R]_\times\hat p+\hat v$
    \State $\dot{\hat v}=[\sigma^R]_\times\hat v+\hat g+\hat R\,a^\mathcal{B}$
    \State $\dot{\hat g}=[\sigma^R]_\times\hat g$
    \State $\dot{\hat p}_i=[\sigma^R]_\times\hat p_i$
    \State $\dot{\bold P}= \bold A(t)\bold P+\bold P\bold A^\top(t)+\bold V(t)$
  \EndWhile
  \State Obtain $\sigma^p$ and $\bold{C}(t)$ from visual measurements at $t_k$
  \State $\bold L=\bold P\bold C^\top(t_k)(\bold C(t_k)\bold P\bold C^\top(t_k)+\bold Q(t_k))^{-1}$
  \State Compute $\bold K_p,\bold K_v,\bold K_g,\bold\Gamma$ from $\bold L$
  \State $\hat R^+=\hat R$
  \State $\hat p^+=\hat p+\hat R\sum_{j=1}^{n}K_j^p\sigma_j^p$
  \State $\hat v^+=\hat v+\hat R\sum_{j=1}^{n}K_j^v\sigma_j^p$
  \State $\hat g^+=\hat g+\hat R\sum_{j=1}^{n}K_j^g\sigma_j^p$
  \State $\hat p_i^+=\hat p_i+\hat R\sum_{j=1}^{n}\Gamma_{ij}\sigma_j^p$
  \State $\bold P^+=(I_{3n+6}-\bold L\bold C)\bold P$
\EndFor
\end{algorithmic}
\end{breakablealgorithm}

\section{SIMULATION}\label{s6}
%\textcolor{red}{I conducted only 5 Monte Carlo simulation runs to shape the simulation section. Later, I plan to increase the number of runs to 50 or 100 for a more comprehensive analysis.} 
To evaluate the performance of the proposed scheme, we conducted a comprehensive simulation study using a rigid body moving in three-dimensional space. The vehicle is equipped with an IMU and a camera system that provides measurements of the form~\eqref{equ:measurements_3d}, \eqref{equ:measurements_sb}, and \eqref{equ:measurements_mb}, which are referred to hereafter as \textit{3D}, \textit{SB}, and \textit{MB}, respectively. The overall simulation setup is inspired by~\cite{Scaramuzza_TR2017}, but has been carefully extended with additional trajectory dynamics, noise models, and camera configurations in order to provide a more realistic setup.
\subsubsection*{Trajectory Generation}
The simulated trajectory is designed to mimic a practical flight scenario for VIO. Specifically, the vehicle follows a circular path of radius $3\,\text{m}$ in the horizontal plane, while simultaneously executing a vertical sinusoidal motion of amplitude $1.5\,\text{m}$ and frequency $0.1\,\text{Hz}$. The forward linear velocity is fixed at $1\,\text{m/s}$, which results in a total path length of approximately $50\,\text{m}$. The total simulation time is therefore $50\,\text{s}$. In addition to the basic circular trajectory, we introduced small oscillations in roll ($5^{\circ}$ amplitude at $0.08\,\text{Hz}$) and pitch ($3^{\circ}$ amplitude at $0.06\,\text{Hz}$) to simulate natural attitude perturbations during flight. These oscillations prevent the motion from being overly idealized. The generated trajectory is depicted in Figure~\ref{fig:traj}, where $1000$ landmarks are randomly distributed on the walls of a cubic environment of side length $12\,\text{m}$. At each camera frame, a maximum of $50$ visible landmarks are selected within a horizontal field of view (FOV) of $120^{\circ}$. This configuration ensures that the number of tracked features varies over time as in real-world scenarios.

\begin{figure}[h]
  \centering
  \includegraphics[width=1\linewidth]{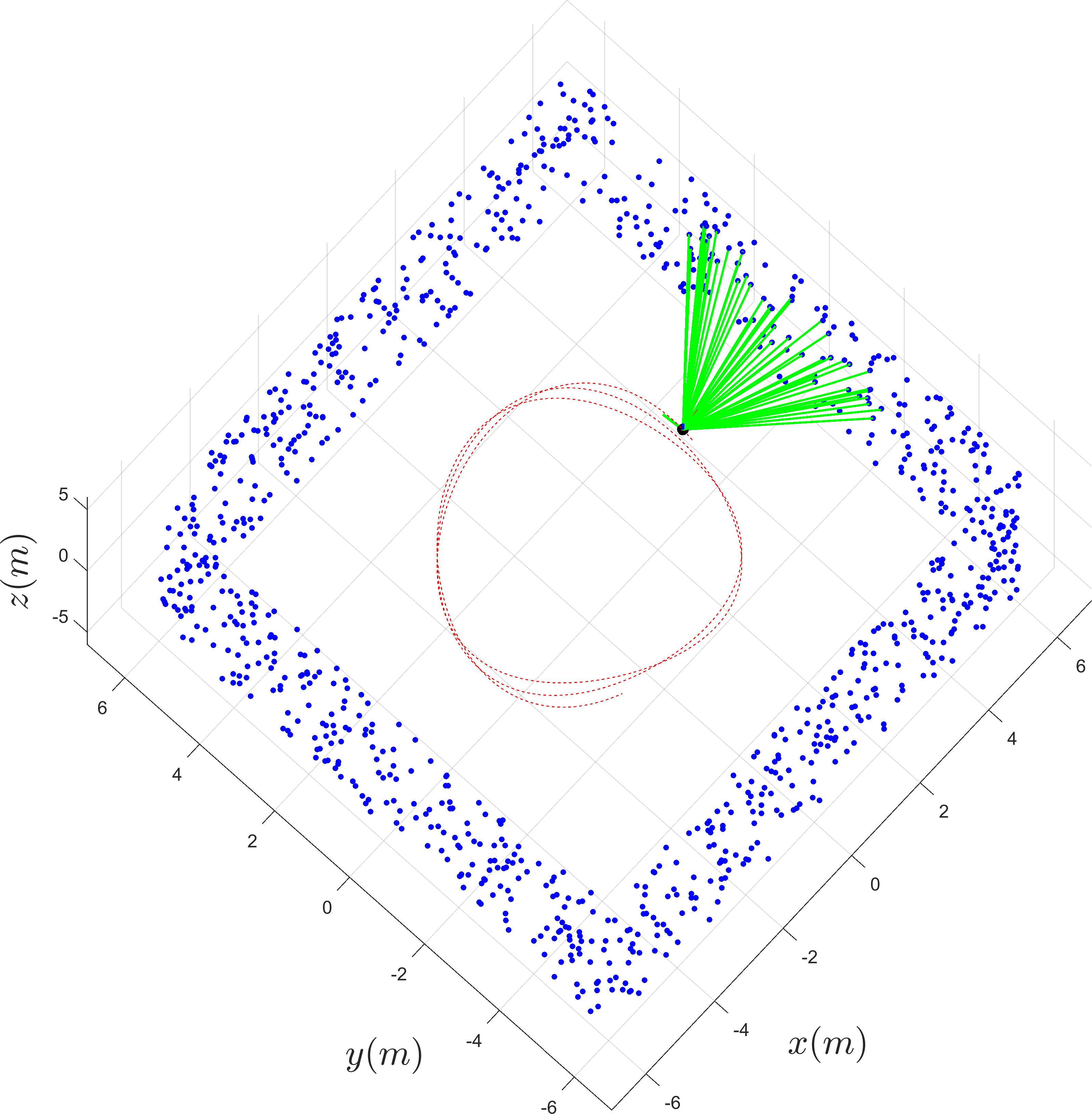}
  \caption{The rigid body’s true trajectory with circular-horizontal and sinusoidal-vertical motion, along with the randomly distributed landmarks on the surrounding walls. The animation video can be found at \href{https://youtu.be/ye6oyXOtPiM}{\textcolor{Rhodamine}{https://youtu.be/ye6oyXOtPiM}}.}
  \label{fig:traj}
\end{figure}

\subsubsection*{Sensor Characteristics and Noise Models}
The IMU is sampled at $200\,\text{Hz}$, while the camera operates at $20\,\text{Hz}$. The IMU measurements are corrupted by additive Gaussian noise, with standard deviations of $0.0035\,\text{rad/s}$ for the gyroscope and $0.095\,\text{m/s}^2$ for the accelerometer. For the camera, the bearing measurements \eqref{equ:measurements_sb} and \eqref{equ:measurements_mb} are corrupted by zero-mean Gaussian noise with a standard deviation of $0.5^{\circ}$, whereas the relative position measurement \eqref{equ:measurements_3d} is subject to a standard deviation of $0.05\,\text{m}$. This noise model captures the typical performance characteristics of low-cost IMU–camera systems. We also used the stereo extrinsics provided in the \textit{EuRoC MAV} dataset \cite{EuRoC_dataset}.
%ensuring that the left and right camera poses are offset by the measured parameters.
\subsubsection*{Simulation Results and Discussion} 
To evaluate the performance of the proposed VIO scheme outlined in \eqref{equ:obr_dynamics1_TV}–\eqref{equ:obr_dynamics5_TV}, we conduct a series of Monte Carlo simulations. The results of our proposed scheme, which incorporates 3D relative position measurements, are compared with those of an IEKF-based VIO framework that also utilizes 3D relative position measurements. The IEKF-based scheme, introduced in \cite{Brossard_SJ2019}, was originally designed to address the SLAM problem, which shares the same objective of estimating the ego-motion of a rigid body. In this simulation, we disable the loop closure mechanism used in \cite{Brossard_SJ2019} and focus solely on the observer component to estimate ego-motion, extending it to incorporate the 3D relative position measurements defined in \eqref{equ:measurements_3d}.

Figure~\ref{fig:rms_pose_error_3d} presents root-mean-squared errors averaged over 17 Monte Carlo simulations for the proposed VIO scheme across the three measurement models~\eqref{equ:measurements_3d}, \eqref{equ:measurements_sb}, and~\eqref{equ:measurements_mb}. It clearly shows a drift over time, particularly in translational motion, which is expected since the proposed scheme preserves the fundamental observability properties of the VIO problem.

Figures~\ref{fig:pose_error_3d} and \ref{fig:pose_error_st} depict the time evolution of the rigid-body pose estimation errors, along with their $3\,\sigma$ uncertainty bounds, for both the IEKF-based scheme using 3D relative position measurements and the proposed VIO scheme under the three aforementioned measurement models. The results show that while the IEKF-based VIO scheme provides reasonable pose estimates, the proposed scheme achieves superior accuracy, as reflected by smaller uncertainties.
%in the pitch and roll angles. 
%Furthermore, the proposed VIO scheme demonstrates stronger consistency properties than the IEKF-based scheme because the uncertainties associated with the four unobservable degrees of freedom, especially those related to translational motion, increase significantly with drift. Thus, the proposed scheme preserves the fundamental observability properties of the VIO problem. 

\begin{figure}
  \centering
  \includegraphics[width=0.9\linewidth]{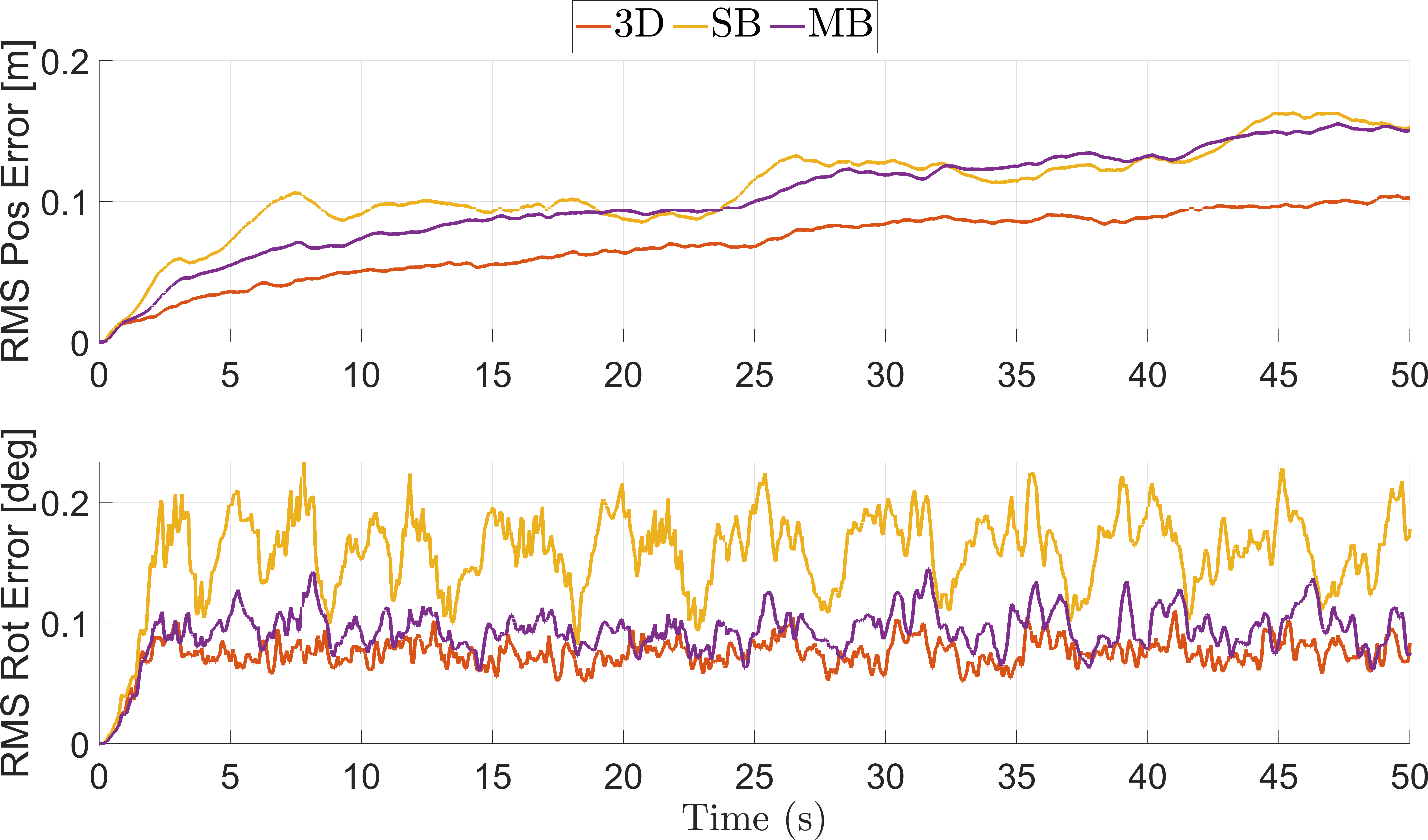}
  \caption{Comparison of root-mean-squared errors averaged over 17 Monte Carlo simulations for the proposed VIO scheme across the three measurement models~\eqref{equ:measurements_3d}, \eqref{equ:measurements_sb}, and~\eqref{equ:measurements_mb}.}
  \label{fig:rms_pose_error_3d}
\end{figure}

\begin{figure}
  \centering
  \includegraphics[width=1\linewidth]{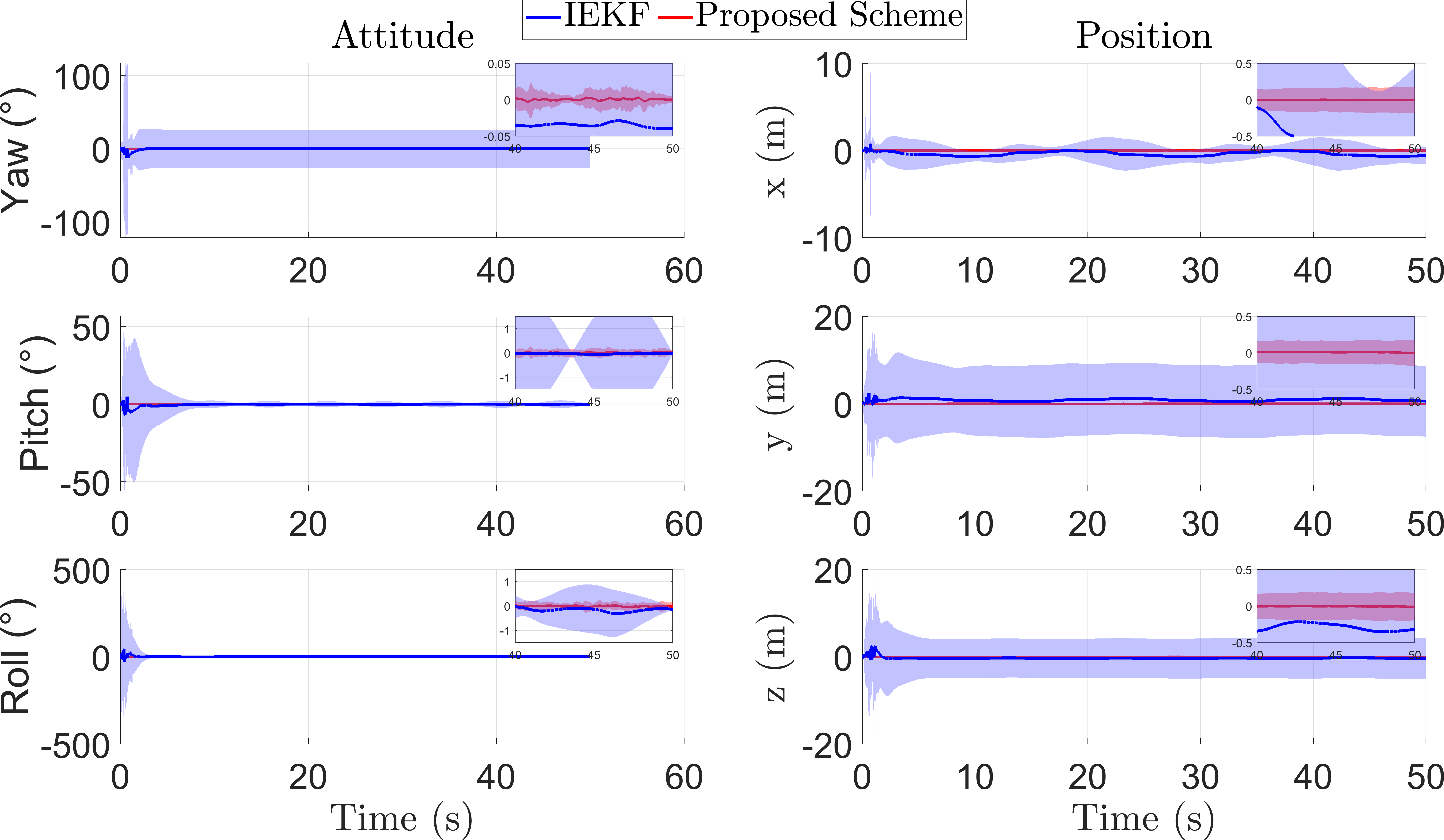}
  \caption{Time evolution of the rigid-body pose errors with $3\,\sigma$ bounds, obtained using the IEKF-based and the proposed VIO schemes under the measurement model defined in~\eqref{equ:measurements_3d}. The results are based on Monte Carlo simulations with 17 runs.}
  \label{fig:pose_error_3d}
\end{figure}

\begin{figure}
  \centering
  \includegraphics[width=1\linewidth]{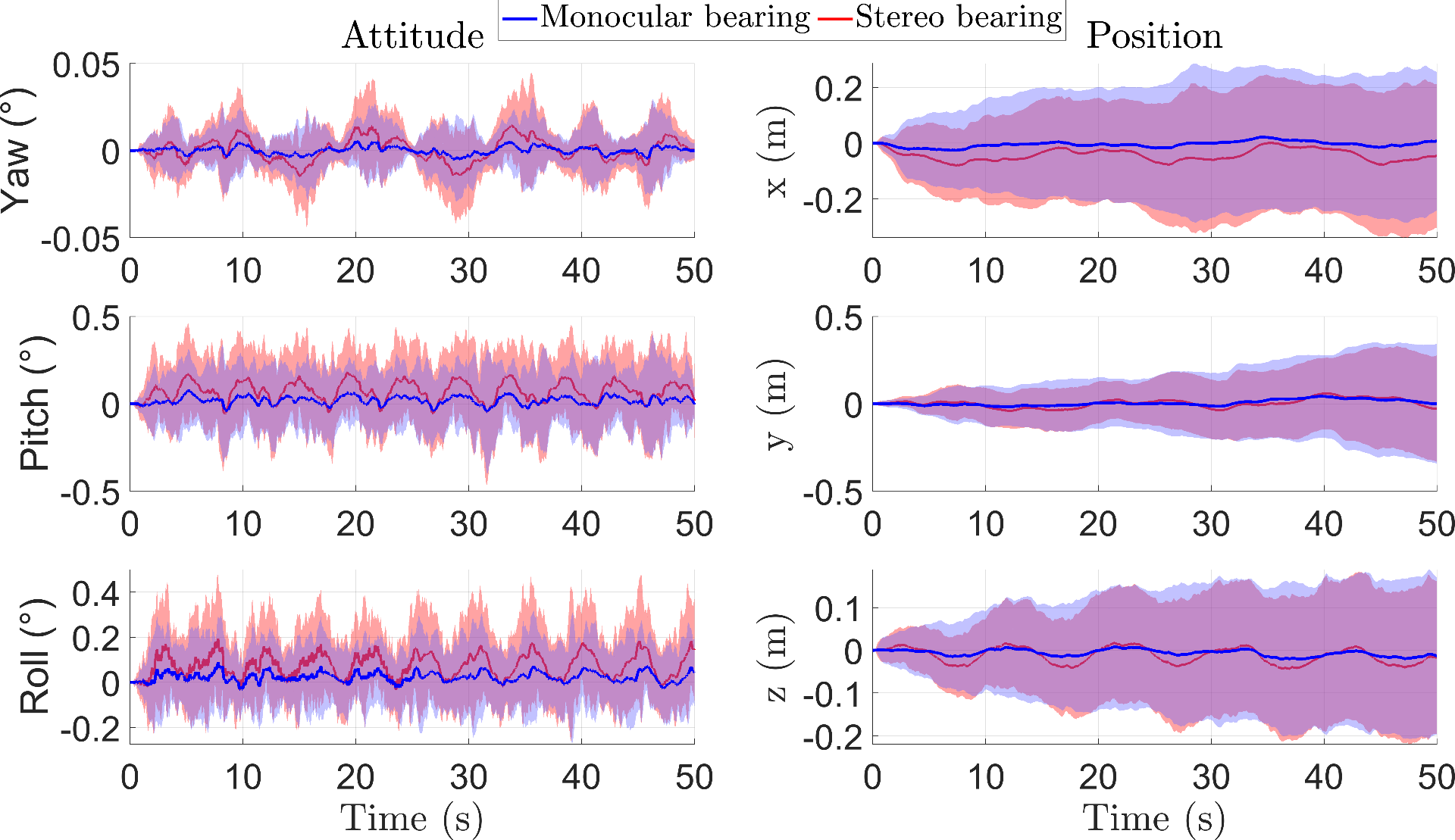}
  \caption{Time evolution of the rigid-body pose errors with $3\,\sigma$ bounds, obtained using the proposed VIO schemes under the measurement model defined in~\eqref{equ:measurements_mb} and \eqref{equ:measurements_sb}. The results are based on Monte Carlo simulations with 17 runs.}
  \label{fig:pose_error_st}
\end{figure}

Figure~\ref{fig:body_velocity_gravity} illustrates the estimated and true values of the body-frame gravity and linear velocity. The results clearly show that the proposed VIO scheme, under the measurement models~\eqref{equ:measurements_3d}, \eqref{equ:measurements_sb}, and~\eqref{equ:measurements_mb}, provides highly accurate estimates of both the gravity direction and the body-frame velocity throughout the trajectory, with no drift, as expected.

\begin{figure}[h]
  \centering
  \includegraphics[width=1\linewidth]{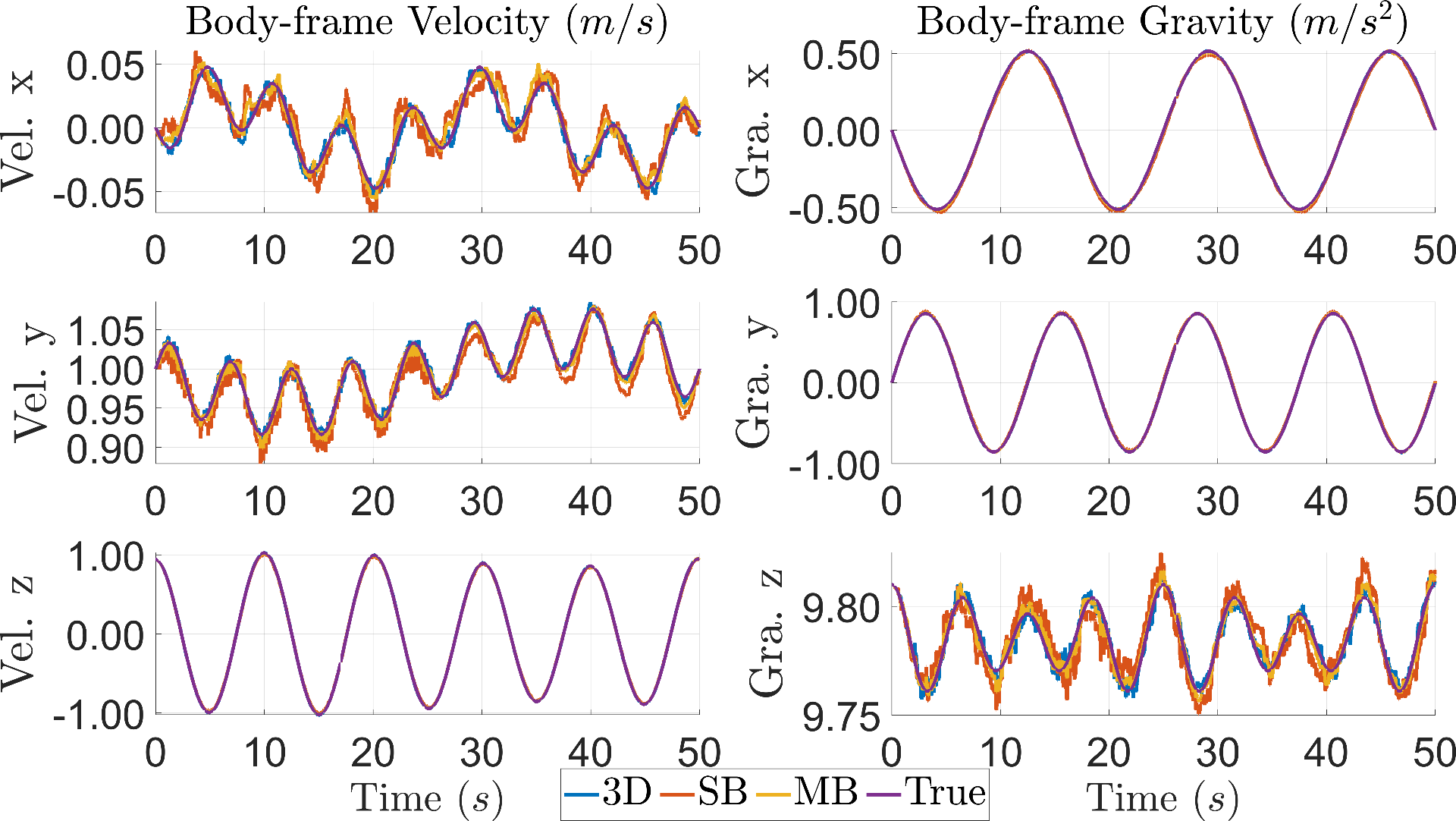}
  \caption{Comparison of estimated and true values of body-fixed frame velocity and gravity under measurement models~\eqref{equ:measurements_3d}, \eqref{equ:measurements_sb}, and~\eqref{equ:measurements_mb}. The results are based on Monte Carlo simulations with 17 runs.}
  \label{fig:body_velocity_gravity}
\end{figure}

\section{EXPERIMENTAL RESULTS}\label{s7}

To further validate the performance of the proposed VIO scheme under realistic sensing conditions, we evaluate it using real-world data from the \textit{EuRoC MAV} dataset~\cite{EuRoC_dataset}. The \textit{EuRoC MAV} dataset is a widely used benchmark for visual–inertial odometry, providing synchronized stereo camera images and IMU measurements recorded onboard a micro aerial vehicle operating in indoor environments, along with the ground truth.

\subsubsection*{Dataset Description and Experimental Setup}
The \textit{EuRoC MAV} dataset consists of multiple sequences recorded in machine hall (MH) and Vicon room (V) environments, featuring different levels of motion dynamics and visual complexity. In our experiments, we consider the sequences \texttt{V1\_01}, \texttt{V1\_02}, and \texttt{V1\_03}.

The dataset provides stereo images at a rate of $20\,\text{Hz}$ and IMU measurements at $200\,\text{Hz}$, which are directly compatible with the proposed observer structure. The stereo camera intrinsics and extrinsics, as well as the IMU–camera calibration parameters, are taken directly from the dataset without modification. The ground-truth pose, obtained from a motion-capture system, is used solely for performance evaluation and is not employed by the estimator.

Feature detection and tracking are performed on stereo image pairs for measurements \eqref{equ:measurements_3d} and \eqref{equ:measurements_sb}, and on the right camera images for measurement \eqref{equ:measurements_mb}. At each camera frame, a varying number of visual features are tracked depending on the scene structure and motion, which naturally reflects realistic operating conditions. The proposed VIO scheme is implemented using the three measurement models defined in \eqref{equ:measurements_3d}, \eqref{equ:measurements_sb}, and \eqref{equ:measurements_mb}.

\subsubsection*{Comparison Framework}
The performance of the proposed approach is compared against an IEKF-based VIO framework adapted from~\cite{Brossard_SJ2019}. For a fair comparison, similar to simulation setup, we disable the loop closure mechanism used in \cite{Brossard_SJ2019} and focus solely on the observer component to estimate ego-motion. Both methods are initialized using the same initial conditions and sensor calibration parameters.

\subsubsection*{Experimental Results and Discussion}
Figures~\ref{fig:V_G_V1_01}–\ref{fig:V_G_V1_03} illustrate the estimated body-frame velocity and body-frame gravity obtained with the proposed VIO scheme under the three considered measurement models, namely relative position, stereo bearing, and monocular bearing measurements. The results are shown for the V1\_01, V1\_02, and V1\_03 sequences of the \textit{EuRoC MAV} dataset and are compared against the corresponding ground-truth trajectories.
Overall, the proposed observer exhibits stable and consistent convergence across all sequences and measurement configurations. In particular, the body-frame velocity and gravity estimates closely track the ground truth. These results corroborate the robustness of the proposed scheme, demonstrating its effectiveness under realistic sensing conditions.
A more detailed comparison is presented in Table~\ref{tab:rms_results}, which reports the root-mean-square (RMS) position error for the IEKF-based VIO scheme using the relative position measurements~\eqref{equ:measurements_3d}, as well as for the proposed scheme under the measurement models~\eqref{equ:measurements_3d}, \eqref{equ:measurements_sb}, and~\eqref{equ:measurements_mb}.

\begin{figure}[h]
  \centering
  \includegraphics[width=1\linewidth]{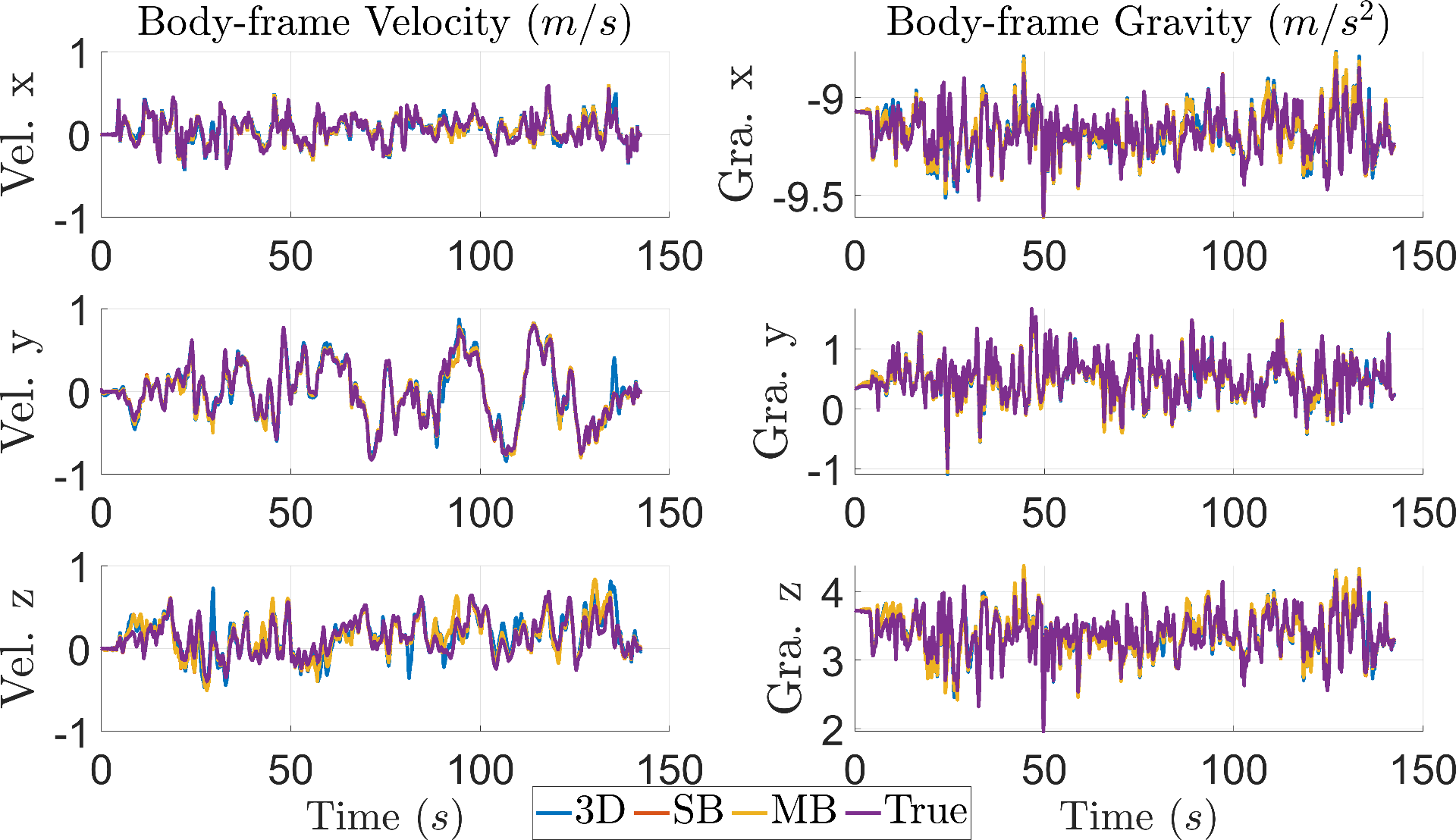}
  \caption{Comparison of estimated and true values of body-frame velocity and gravity under measurement models~\eqref{equ:measurements_3d}, \eqref{equ:measurements_sb}, and~\eqref{equ:measurements_mb}. The presented results correspond to the V1\_01 sequence of the \textit{EuRoC MAV} dataset.}
  \label{fig:V_G_V1_01}
\end{figure}

\begin{figure}[h]
  \centering
  \includegraphics[width=1\linewidth]{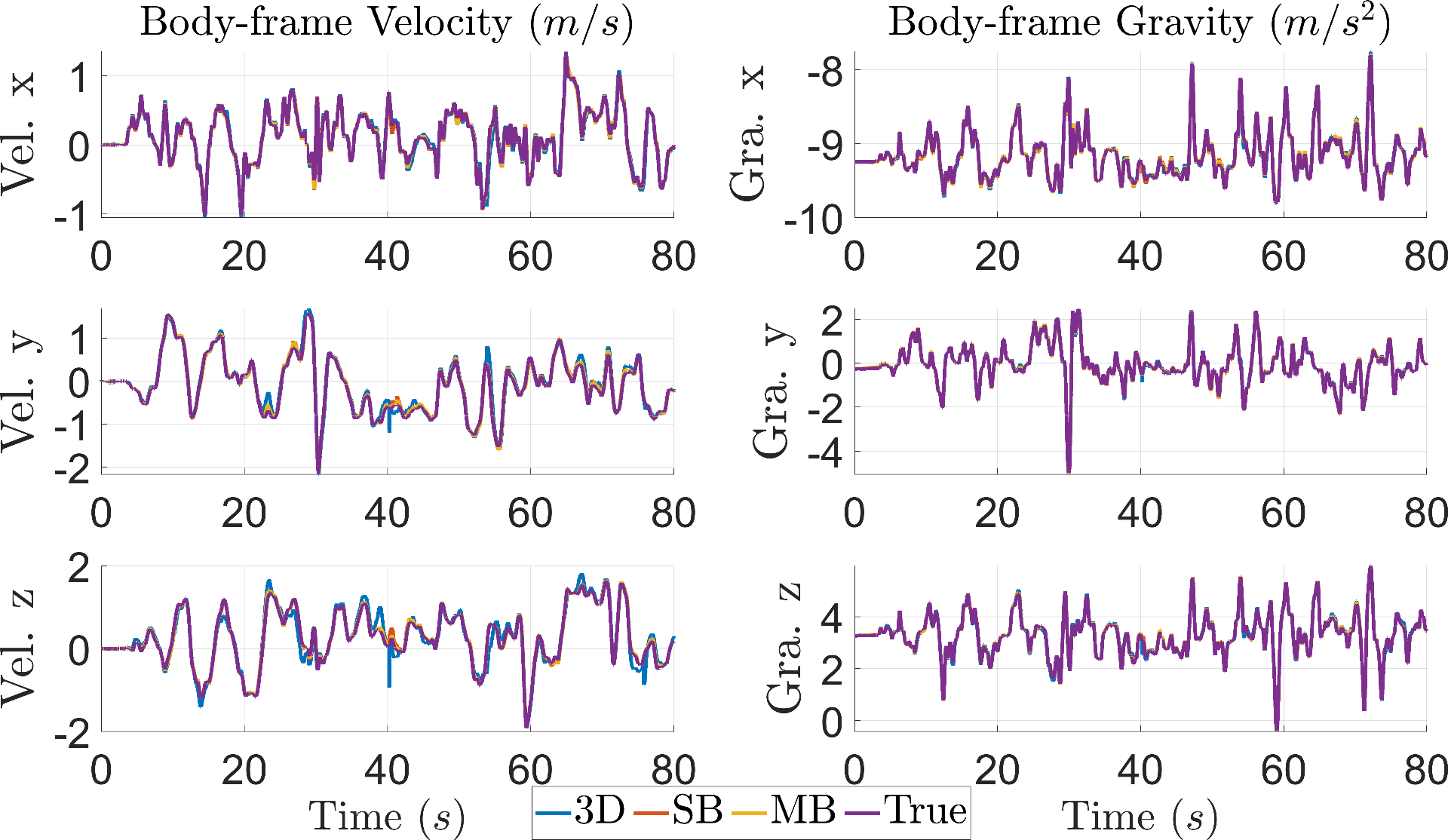}
  \caption{Comparison of estimated and true values of body-frame velocity and gravity under measurement models~\eqref{equ:measurements_3d}, \eqref{equ:measurements_sb}, and~\eqref{equ:measurements_mb}. The presented results correspond to the V1\_02 sequence of the \textit{EuRoC MAV} dataset.}
  \label{fig:V_G_V1_02}
\end{figure}

\begin{figure}[h]
  \centering
  \includegraphics[width=1\linewidth]{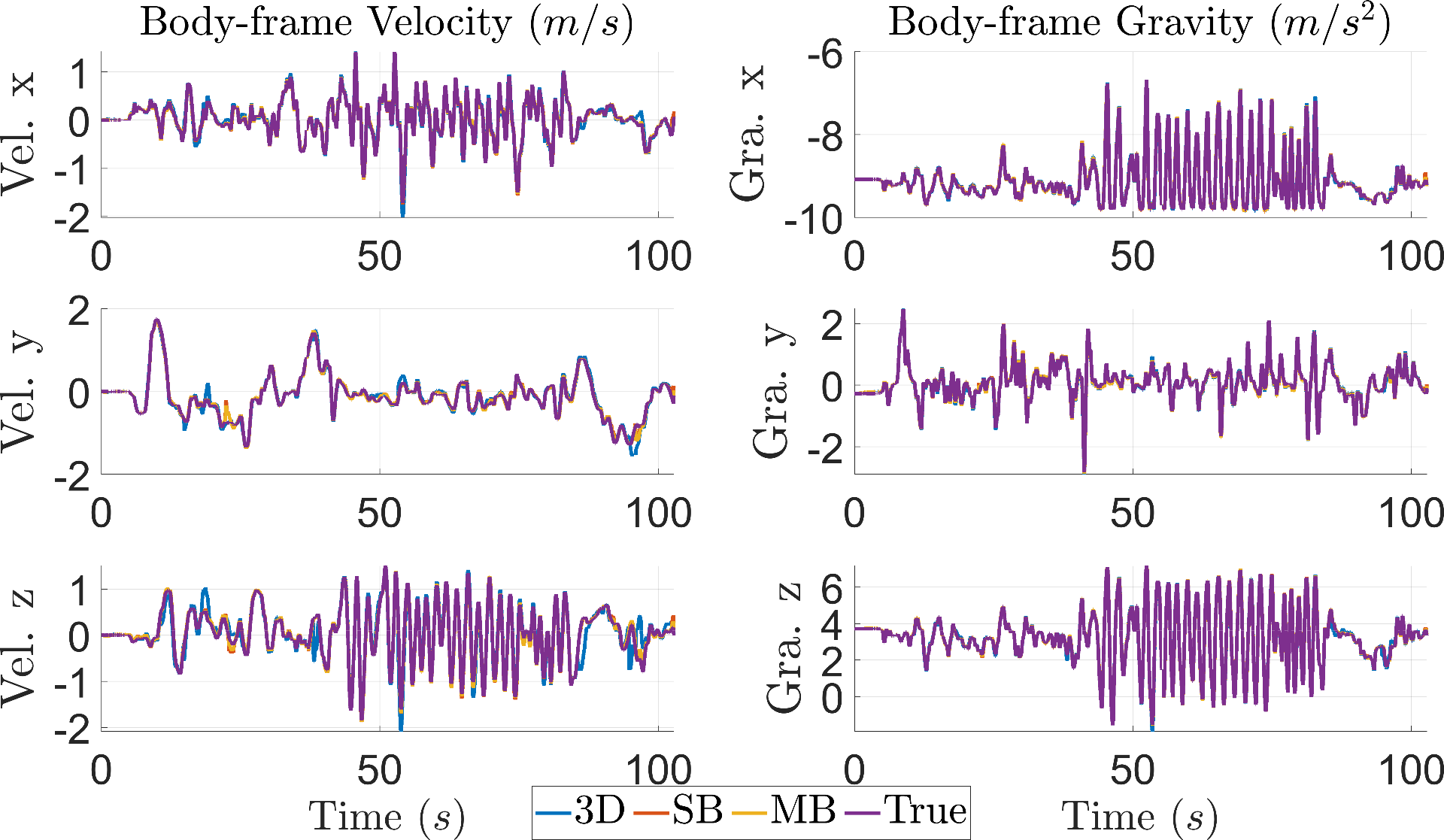}
  \caption{Comparison of estimated and true values of body-frame velocity and gravity under measurement models~\eqref{equ:measurements_3d}, \eqref{equ:measurements_sb}, and~\eqref{equ:measurements_mb}. The presented results correspond to the V1\_03 sequence of the \textit{EuRoC MAV} dataset.}
  \label{fig:V_G_V1_03}
\end{figure}

\begin{table*}[ht]
\centering
\caption{RMS Results of Different Simulations for Various Rooms.}
\begin{tabular}{|l|c|c|c|c|}
\hline
\multirow{2}{*}{\textbf{Room Name}} & \multicolumn{1}{c|}{\textbf{IEKF Scheme}} & \multicolumn{3}{c|}{\textbf{Proposed Scheme}} \\
\cline{2-5}
& relative-positions & relative-positions & stereo bearings & monocular bearings\\
\hline
~~~~V1\_01 & 0.63 & 0.70 & 0.77 & 0.81 \\
~~~~V1\_02 & 1.63 & 0.49 & 0.42 & 0.34 \\
~~~~V1\_03 & 1.96 & 1.78 & 0.33 & 0.39 \\
\hline
\end{tabular}
\label{tab:rms_results}
\end{table*}

\section{CONCLUSIONS}\label{s8}
This paper introduced a novel nonlinear geometric observer for visual–inertial odometry, formulated on the newly defined matrix Lie group \(SE_{3+n}(3)\), which describes pose, gravity, linear velocity, and landmark positions within a consistent geometric framework. The proposed observer achieves almost global asymptotic stability for three distinct types of visual measurements: relative position, stereo bearing, and monocular bearing, demonstrating its generality and robustness. A key feature of the design lies in the use of a geometric error definition that naturally leads to a time-varying linear translational subsystem, thereby eliminating the need for local linearization. This property not only simplifies the observer design but also significantly enhances consistency compared to EKF-based and other linearization-dependent approaches. Moreover, by decoupling the rotational and translational dynamics through the introduction of auxiliary dynamics, the observer adopts a cascaded structure in which the translational subsystem admits a Riccati-based (globally exponentially stable) estimator, while the rotational subsystem achieves almost global input-to-state stability. This structure guarantees consistent and robust estimation of the extended VIO state (\ie, orientation, position, velocity, gravity, and landmark locations) up to the unobservable directions of global translation and rotation about gravity. The effectiveness of the proposed approach was validated through Monte Carlo simulations as well as experimental evaluations on the \textit{EuRoC MAV} dataset, confirming its robustness under realistic sensing conditions. Future research will explore extensions of the proposed framework to account for IMU biases and to further assess its performance on additional real-world datasets. Additional efforts will also focus on deploying the proposed observer on embedded robotic platforms.

%Monte Carlo simulations validated the theoretical results, confirming the proposed observer’s effectiveness and resilience to sensor noise and initialization uncertainty. Future research will explore the extension of this framework to account for time-varying IMU biases and to validate its performance on real-world datasets. Additional efforts will also focus on deploying the proposed observer on embedded robotic platforms.

%\addtolength{\textheight}{-10cm}   % This command serves to balance the column lengths
                                  % on the last page of the document manually. It shortens
                                  % the textheight of the last page by a suitable amount.
                                  % This command does not take effect until the next page
                                  % so it should come on the page before the last. Make
                                  % sure that you do not shorten the textheight too much.

%%%%%%%%%%%%%%%%%%%%%%%%%%%%%%%%%%%%%%%%%%%%%%%%%%%%%%%%%%%%%%%%%%%%%%%%%%%%%%%%

%%%%%%%%%%%%%%%%%%%%%%%%%%%%%%%%%%%%%%%%%%%%%%%%%%%%%%%%%%%%%%%%%%%%%%%%%%%%%%%%

\appendices 

\section{Proof of Lemma \ref{lem_uo}} \label{app_uo}
The proof of this Lemma is motivated from \cite{Wang_TAC2021}. Let $\bold T(t) = I_{n+2} \otimes R^\top(t)$, $\bold S(t)= I_{n+2} \otimes \left(-[\omega^\mathcal{B}]_\times\right)$ and $\bar{\bold A}=\bold A(t)-\bold S(t)$. Note that the matrix $\bar{\bold A}$ is a constant matrix and is given as $\bar{\bold A}=A \otimes I_3$, where $A$ is defined as follows
\begin{align}
    A&=\begin{bmatrix}
       D&0_{2\times n}\\
        B_n&0_{n \times n}\\
    \end{bmatrix}.
\end{align}
with $D=\begin{bmatrix}
        0&1\\
        0&0
    \end{bmatrix}$ and $B_n$ is given in \eqref{B_definition}. One can check that $\dot{\bold T}(t)=\bold S(t) \bold T(t)$ and $\bold T(t) \bar{\bold A}=\bar{\bold A}\bold T(t)$. Now, we will demonstrate that the state transition matrix \(\bold{\Phi}(t, \tau)\) can be expressed as 
\begin{equation}\label{dt_phi}
    \bold\Phi(t, \tau) = \bold T(t) \bar{\bold\Phi}(t, \tau) \bold T^{-1}(\tau),
\end{equation} 
where \(\bar{\bold\Phi}(t, \tau) = \exp(\bar{\bold A}(t-\tau))\) is the state transition matrix of \(\bar{\bold A}\). Note that \(\bold\Phi(t, t) = I_n\), \(\bold\Phi^{-1}(t, \tau)=\bold\Phi(\tau, t)\) and \(\bold\Phi(t_3, t_2) \bold\Phi(t_2, t_1) = \bold\Phi(t_3, t_1)\) for any \(t_1, t_2, t_3 \geq 0\). To verify that \(\bold\Phi(t, \tau)\) is the state transition matrix associated with \(\bold A(t)\), we will find its derivative with respect to \(t\). It follows from \eqref{dt_phi} that
\[
\frac{d}{dt} \bold \Phi(t, \tau) = \dot{\bold T}(t) \bar{\bold\Phi}(t, \tau) \bold T^{-1}(\tau) + \bold T(t) \frac{d}{dt}\bar{\bold \Phi}(t, \tau) \bold T^{-1}(\tau).
\]  
Using the fact that \(\frac{d}{dt}\bar{\bold\Phi}(t, \tau) = \bar{\bold A}\bar{\bold \Phi}(t, \tau)\), $\dot{\bold T}(t)=\bold S(t) \bold T(t)$ and $\bold T(t) \bar{\bold A}=\bar{\bold A}\bold T(t)$, one has  
\[
\frac{d}{dt} \bold\Phi(t, \tau) = \bold A \bold \Phi(t, \tau).
\]  
This implies that \(\bold\Phi(t, \tau)\) is the state transition matrix associated with \(\bold A(t)\). Next, we establish the uniform observability of the pair \((\bold A(t), \bold C(t))\) by demonstrating the existence of constants $\delta>0$ and $\mu>0$ such that the observability Gramian satisfies the following condition
\begin{align}\label{og_1}
   \bold G_\delta(t)= \frac{1}{\delta} \int_t^{t + \delta} \bold\Phi(\tau, t)^\top \bold C(t)^\top \bold C(t) \bold \Phi(\tau, t) \, d\tau \geq \mu I_{3n+6}.
\end{align}
To facilitate the analysis, the output matrix $\boldsymbol C(t)$ is expressed as
\[
\boldsymbol C(t) = \boldsymbol P(t)\,\bar{\boldsymbol C},
\]
where $\boldsymbol P(t) = \operatorname{blkdiag}(\Pi_1,\dots,\Pi_n)$ and $\bar{\boldsymbol C}
= \begin{bmatrix}
0_{3n\times 6} & I_{3n}
\end{bmatrix}$. Invoking~\eqref{dt_phi}, together with the identities
$\boldsymbol T^{-1}(\tau) = \boldsymbol T^\top(\tau)$ and
\[
\bar{\boldsymbol C}\,\boldsymbol T(\tau)
= \left(I_n \otimes R^\top(\tau)\right)\bar{\boldsymbol C},
\]
the observability Gramian in~\eqref{og_1} can be rewritten as
    
{\small
\begin{align}\label{og_2}
    &\bold G_\delta(t)=\frac{1}{\delta} \int_t^{t + \delta} \bold\Phi(\tau, t)^\top \bold C(\tau)^\top \bold C(\tau) \bold \Phi(\tau, t) \, d\tau \nonumber\\
    &=\frac{1}{\delta} \bold T(t) \left( \int_t^{t + \delta} \bar{\bold\Phi}(\tau, t)^\top \bar{\bold C}^\top \bar{\bold P}(\tau)^\top \bar{\bold P}(\tau) \bar{\bold C}\bar{\bold \Phi}(\tau, t) \, d\tau \right) \bold T^{-1}(t), 
\end{align}}where $\bar{\bold P}(t)=(I_n \otimes R(t))\, \bold P(t) \, (I_n \otimes R^\top(t))$. Note that $\bar{\bold{P}}(t) = \operatorname{blkdiag}(\bar{\Pi}_1, \dots, \bar{\Pi}_n)$, where each $\bar{\Pi}_i$ is given by $\bar{\Pi}_i = R \, \Pi_i\, R^\top$ for $i = 1, 2, \dots, n$. To proceed with the proof, we analyze equation~\eqref{og_2} separately for each measurement type, as carried out in the sequel.

\subsubsection*{Relative position measurements} Considering the landmark measurements provided in \eqref{equ:measurements_3d_on_group}, one can verify that $\bar{\bold P}(t)=I_{3n}$, which further implies that 
\begin{align}\label{gram}
     \bold G_\delta(t)=\frac{1}{\delta} \bold T(t) \left( \int_t^{t+\delta} \bar{\bold \Phi}^\top(\tau, t) \bar{\bold C}^\top \bar{\bold C} \bar{\bold \Phi}(\tau, t) \, d\tau \right) \bold T^{-1}(t).
\end{align}
Now, we are going to show that the pair $(\bar{\bold A}, \bar{\bold C})$ is  (Kalman) observable. Let $\mathcal{O}$ denote the observability matrix associated with the pair $(\bar{\bold A}, \bar{\bold C})$, which is given by
\begin{align}
    \mathcal{O}=\begin{bmatrix}
        I_n&0_{n\times 2}\\
        0_n&B_n\\
        0_n&\bar B_n\\
        0_n&0_{n\times 2}\\
        \vdots&\vdots\\
        0_n&0_{n\times 2}\\
    \end{bmatrix}\otimes I_3, ~~\text{where}~~\bar B_n=\begin{bmatrix}
        0_{n \times 1} & \mathbf 1^\top
    \end{bmatrix}.\label{obs_3d}
\end{align}
Since $\det(\mathcal{O}^\top \mathcal{O}) \neq 0$, the observability matrix $\mathcal{O}$ is full rank, which implies that the pair $(\bar{\bold A}, \bar{\bold C})$ is  (Kalman) observable. Consequently, there exist positive constants $\bar \delta > 0$ and $\bar \mu > 0$ such that:
\begin{align}\label{uo_cst}
    \frac{1}{\bar \delta} \int_t^{t+\bar \delta} \bold{\bar{\Phi}}^\top(\tau, t) \bar{\bold C}^\top \bar{\bold C} \bold{\bar{\Phi}}(\tau, t) \, d\tau \geq \bar \mu I_{3n+6}.
\end{align}
Substituting this back to \eqref{gram} and choosing $\delta> \bar \delta$ and $0<\mu\leq \frac{\bar \delta \, \bar \mu}{\delta}$, one concludes that 
\begin{equation}
    \bold G_\delta(t) \geq \mu I_{3n+6},
\end{equation}
for all \(t \geq 0\). This implies that the pair \((\bold A(t), \bold C(t))\) satisfies the uniform observability condition under the measurement \eqref{equ:measurements_3d_on_group}.

\subsubsection*{Stereo-bearing measurements} For the stereo-bearing measurements \eqref{equ:measurements_sb_on_group}, one can show that $\Pi_i$ is uniformly positive definite for each $i = 1, 2, \ldots, n$, which also implies that $\bar{\bold{P}}(t)$ is uniformly positive definite, \ie, $\bar{\bold{P}}(t) \geq \mu_{\bar{\bold{P}}}\,I_{3n}$ with $\mu_{\bar{\bold{P}}}>0$. It follows from \eqref{og_2} that 

{\small
\begin{align}\label{uo_1}
    \bold G_\delta(t)\geq\frac{\mu_{\bar{\bold{P}}}}{\delta} \bold T(t) \left( \int_t^{t + \delta} \bar{\bold\Phi}(\tau, t)^\top \bar{\bold C}^\top \bar{\bold C}\bar{\bold \Phi}(\tau, t) \, d\tau \right) \bold T^{-1}(t). 
\end{align}}Moreover, from \eqref{uo_cst} and \eqref{uo_1}, one can conclude that 
\begin{align}\label{uo_3}
    \bold G_\delta(t) \geq \mu \, I_{3n+6},
\end{align}
where $\delta > \bar \delta$ and $0 < \mu \leq \frac{\mu_{\bar{\bold{P}}}\, \bar \delta \, \bar \mu}{\delta}$. Inequality \eqref{uo_3} confirms that the pair $(\bold A(t), \bold C(t))$ satisfies the uniform observability condition considering the stereo-bearing measurements \eqref{equ:measurements_sb_on_group}.

\subsubsection*{Monocular-bearing measurements} %Similarly to the stereo-bearing measurements, considering 
Given the monocular-bearing measurements \eqref{equ:measurements_mb_on_group}, where each measurement satisfies inequality \eqref{ineq_uo} with constants $\delta^*>0$ and $\mu^*>0$, for $t \geq 0$, one can verify that 
%there exist two constants $\delta^*>0$ and $\mu^*>0$ such that 
 \begin{align}\label{ineq_1}
     \frac{1}{\delta^*}\int_{t}^{t+\delta^*} \bar{\bold P}(\tau)d\tau > \mu^*\, I_{3n},
 \end{align}
 for all $t \geq 0$. Consider inequality \eqref{ineq_1}, along with the facts that $\bar{\bold P}^\top \bar{\bold P} = \bar{\bold P}$, the matrix $\bar{\bold P}$ is positive semi-definite, and the pair $(\bar{\bold A}, \bar{\bold C})$ is (Kalman) observable. According to \cite[Lemma 2.7]{HAMEL2017137}, it follows that there exist two constants $\bar{\bar{ \delta}}> 0$ and $\bar{\bar{\mu}} > 0$ such that

\begin{align} \label{ineq_2}
   \frac{1}{\bar{\bar{\delta}}} \int_t^{t + \bar{\bar{\delta}}} \bar{\bold\Phi}(\tau, t)^\top \bar{\bold C}^\top \bar{\bold P}(\tau)^\top \bar{\bold P}(\tau) \bar{\bold C}\bar{\bold \Phi}(\tau, t) \, d\tau \geq \bar{\bar{ \mu}} I_{3n+6}. 
\end{align}
Furthermore, picking $\delta>0$ and $\mu > 0$ such that $\delta > \bar{\bar{\delta}}$ and $\mu \leq \frac{\bar{\bar{\mu}}\,\bar{\bar{\delta}}}{\delta}$, it follows from inequality \eqref{ineq_2} that 
\begin{align}
    \bold G_\delta(t) \geq \mu I_{3n+6},
\end{align}
This implies that the pair $(\bold A(t), \bold C(t))$ is uniformly observable under the monocular-bearing measurements \eqref{equ:measurements_mb_on_group}. This completes the proof.

\section{Proof of Theorem \ref{thm:main}} \label{app_main_thr}
Let us begin this proof by analyzing the stability properties of the translational subsystem \eqref{csc_sys2}. Consider the following real-valued function:
\begin{equation}
    \mathcal{V}_P(x)=x^\top P^{-1} x,
\end{equation}
where \(P(t)\) is the solution of the CRE \eqref{CRE}. Since the two matrices $Q(t)$ and $V(t)$ are uniformly positive definite, and the pair $(\bold A(t), \bold C)$ is uniformly observable, it follows from Lemma \ref{lem:cre_sol} that \( P(t) \), $\forall t \geq 0$, satisfies \( p_m I_n \leq P(t) \leq p_M I_n \) for some constants \( 0 < p_m \leq p_M < \infty \). This implies that 
\begin{equation}
    \frac{1}{p_M}||x||^2 \leq \mathcal{V}_P(x)\leq \frac{1}{p_m}||x||^2.
\end{equation}
Considering the dynamics \eqref{csc_sys2}, the time-derivative of $\mathcal{V}_P$ is given by
\begin{align}\label{ineq_v}
    \dot{\mathcal{V}}_P=& x^\top(P^{-1} \bold A + \bold A^\top P^{-1}-2 \bold C^\top Q \bold C + \dot{P}) x\nonumber\\
    =&-x^\top P^{-1} V P^{-1}-x^\top \bold C^\top Q \bold C x\nonumber\\
    \leq & -\frac{v_m}{p^2_M}||x||^2\leq -\rho \mathcal{V}_P,
\end{align}
with \(\rho = v_m \frac{p_m}{p^2_M}\) and $v_m = \inf_{t \geq 0} \lambda^{V(t)}_{\text{min}}$ where $\lambda^{V(t)}_{\text{min}}$ is the minimum eigenvalue of $V$ at $t$. The last inequalities were derived using the facts $\dot P^{-1}=-P^{-1} \dot P P^{-1}$ and \(\bold C^\top Q \bold C \geq 0\). It follows from \eqref{ineq_v} that \(||x(t)||\leq \sqrt{\frac{p_M}{p_m}} \exp{\left(-\frac{\rho}{2}t\right)}||x(0)||\),  which implies that \(x\) converges to zero exponentially. Next, to establish the stability properties of the overall closed-loop system \eqref{csc_sys1}-\eqref{csc_sys2}, we will first derive two results considering the reduced attitude closed-loop system \eqref{csc_sys1}. The first result establishes the stability properties of the system \eqref{csc_sys1} with $x=0$, and the second studies the Input to State Stability (ISS) properties of \eqref{csc_sys1} with respect to $x$. Finally, we will conclude the stability properties of the overall system \eqref{csc_sys1}-\eqref{csc_sys2}. The following lemma represents the first result related to the system \eqref{csc_sys1} with $x=0$.
\begin{lem}\label{lem1}
    Let $k^R>0$. Then, the following statements hold:
     \begin{enumerate}[i)]
     \item System \eqref{csc_sys1}, with $x=0$, has two equilibria: the desired equilibrium $\breve g=g$ and the undesired one $\breve g=-g$.\label{eq_p}
     \item The desired equilibrium $\breve g=g$, for system \eqref{csc_sys1} with $x=0$, is almost globally asymptotically stable on $\mathbb{S}^2_g$.\label{des_eq}
     \end{enumerate}
\end{lem}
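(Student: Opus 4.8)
The plan is to first simplify the right-hand side of \eqref{csc_sys1} with $x=0$ using the vector triple product identity. Writing $[\breve g \times g]_\times \breve g = (\breve g \times g)\times \breve g$ and applying the identity $(a\times b)\times c = (a^\top c)\,b - (b^\top c)\,a$ together with the constraint $\|\breve g\|^2 = \|g\|^2$ that holds on $\mathbb{S}^2_g$, the dynamics reduce to the compact form
\begin{equation}
    \dot{\breve g} = k^R\bigl(\|g\|^2\, g - (\breve g^\top g)\,\breve g\bigr). \label{eq:reduced_rhs}
\end{equation}
For statement \ref{eq_p}), I would set the right-hand side of \eqref{eq:reduced_rhs} to zero; since the term $\|g\|^2 g$ is a nonzero multiple of $g$, this forces $\breve g$ to be collinear with $g$, and combining with the norm constraint $\|\breve g\| = \|g\|$ yields exactly $\breve g = \pm g$. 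A direct substitution then confirms that both points are genuine equilibria.

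For statement \ref{des_eq}), I would use the Lyapunov candidate
\begin{equation}
    \mathcal{V}(\breve g) = \tfrac{1}{2}\|\breve g - g\|^2 = \|g\|^2 - g^\top \breve g,
\end{equation}
which is nonnegative on $\mathbb{S}^2_g$ by Cauchy--Schwarz, vanishes only at $\breve g = g$, and attains its maximum $2\|g\|^2$ at $\breve g = -g$. Differentiating along \eqref{eq:reduced_rhs} and writing $s := g^\top \breve g$, a short computation gives
\begin{equation}
    \dot{\mathcal{V}} = -k^R\bigl(\|g\|^4 - s^2\bigr) = -k^R\,\mathcal{V}\,(2\|g\|^2 - \mathcal{V}) \le 0,
\end{equation}
with equality only at the two equilibria. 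Strict negativity of $\dot{\mathcal{V}}$ in a punctured neighborhood of $\breve g = g$ yields local asymptotic stability, while LaSalle's invariance principle on the compact manifold $\mathbb{S}^2_g$ shows that every trajectory converges to the invariant set $\{g,\,-g\}$.

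To upgrade this to almost global asymptotic stability, the remaining task is to show that the set of initial conditions converging to the undesired equilibrium $\breve g = -g$ has zero Lebesgue measure. I would do so by linearizing \eqref{eq:reduced_rhs} about $\breve g = -g$: writing $\breve g = -g + \delta$ with $\delta$ in the tangent space (so $\delta^\top g = 0$), the linearized dynamics become $\dot\delta = k^R\|g\|^2\, \delta$, whose eigenvalue is $k^R\|g\|^2 > 0$ on both tangent directions. Hence $\breve g = -g$ is a hyperbolic repeller, its stable set reduces to the single point $\{-g\}$, and the region of attraction of $\breve g = g$ is $\mathbb{S}^2_g\setminus\{-g\}$, whose complement has zero measure. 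The main obstacle is precisely this last step: establishing that the antipodal equilibrium is unstable with a measure-zero basin, which is the strongest stability one can hope for given the well-known topological obstruction to global stabilization on the sphere.
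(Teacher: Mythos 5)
Your proof is correct and follows essentially the same route as the paper: your Lyapunov candidate $\tfrac{1}{2}\|\breve g-g\|^2$ is exactly the paper's $\mathcal{L}_1$, and your derivative $-k^R\left(\|g\|^4-(g^\top\breve g)^2\right)$ is the paper's $-k^R\|g\times\breve g\|^2$ rewritten. The only (valid) deviation is that you establish the measure-zero basin of $\breve g=-g$ by linearization, showing it is a hyperbolic source with trivial stable set, whereas the paper uses the nondecreasing function $\mathcal{L}_2=\tfrac{1}{2}\|\breve g+g\|^2$ to conclude that $-g$ is a repeller; the paper itself carries out the same linearization later, in the proof of its ISS lemma.
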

\begin{proof}
From \eqref{csc_sys1}, with $x=0$, it is clear that the equilibrium points on $\mathbb{S}^2_g$ are $\breve g=g$ and $\breve g=-g$. The time-derivative of the following positive definite function  $\mathcal{L}_1=\frac{1}{2}||g-\breve g||^2$, along the trajectories of \eqref{csc_sys1} with $x=0$, is given by  $\dot{\mathcal{L}}_1=-k^R||g \times \breve g||^2$. The time-derivative of the following positive definite function $\mathcal{L}_2=\frac{1}{2}||g+\breve g||^2$, along the trajectories of \eqref{csc_sys1} with $x=0$, is given by  $\dot{\mathcal{L}}_2=k^R||g \times \breve g||^2$. Therefore, the equilibrium point $\breve g=g$ is stable and the equilibrium point $\breve g=-g$ is unstable (repeller). Almost global asymptotic stability on $\mathbb{S}^2_g$ of $\breve g=g$ immediately follows. The time derivative of $\mathcal{L}_1$ and $\mathcal{L}_2$ were obtained using the fact that $u^\top [v]_\times u=0$ and $u^\top[u \times v]_\times v =-||u \times v||^2$ for every $u, v \in \mathbb{R}^3$. This completes the proof of Lemma \ref{lem1}.
\end{proof}
Next, we will study the ISS property of \eqref{csc_sys1}, with respect to $x$ using the notion of almost global ISS introduced in \cite{Angeli_TAC2011}. 

\begin{lem}\label{lemma:ISS}
    Given $k^R>0$, the system \eqref{csc_sys1} is almost globally ISS with respect to the equilibrium point $\breve g =g$ and the input $x$.
\end{lem}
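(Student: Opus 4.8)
The plan is to reuse the Lyapunov function $\mathcal{L}_1=\frac{1}{2}\|g-\breve g\|^2$ from Lemma \ref{lem1}, but now differentiate it along the full forced dynamics \eqref{csc_sys1}. First I would expand $\dot{\mathcal{L}}_1=-(g-\breve g)^\top\dot{\breve g}$ and simplify using the identities $\breve g^\top[\breve g]_\times=0$, $g^\top[\breve g]_\times=(g\times\breve g)^\top$, and the triple-product expansion of $[\breve g\times g]_\times\breve g$. Since $\|\breve g\|=\|g\|$ on $\mathbb{S}^2_g$, the unforced part reproduces the dissipation of Lemma \ref{lem1}, while the input enters only through the term $(g\times\breve g)^\top\Upsilon(t)x$, giving
\begin{equation*}
\dot{\mathcal{L}}_1=-k^R\|g\times\breve g\|^2+(g\times\breve g)^\top\Upsilon(t)\,x.
\end{equation*}

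Next I would record the algebraic identity, valid on $\mathbb{S}^2_g$,
\begin{equation*}
\|g\times\breve g\|^2=\mathcal{L}_1\bigl(2\|g\|^2-\mathcal{L}_1\bigr),
\end{equation*}
obtained from $\|g\times\breve g\|^2=\|g\|^4-(g^\top\breve g)^2$ and $\mathcal{L}_1=\|g\|^2-g^\top\breve g$. Bounding the cross term by Cauchy--Schwarz together with the stated estimate $\|\Upsilon(t)\|\le\|\Upsilon(t)\|_F=\sqrt{2}\,k^R\|g\|$ and then applying Young's inequality yields
\begin{equation*}
\dot{\mathcal{L}}_1\le-\frac{k^R}{2}\,\mathcal{L}_1\bigl(2\|g\|^2-\mathcal{L}_1\bigr)+k^R\|g\|^2\|x\|^2.
\end{equation*}
On the hemisphere $g^\top\breve g\ge0$, equivalently $\mathcal{L}_1\le\|g\|^2$, one has $2\|g\|^2-\mathcal{L}_1\ge\|g\|^2$, so this collapses to the exponential ISS estimate $\dot{\mathcal{L}}_1\le-\frac{k^R\|g\|^2}{2}\mathcal{L}_1+k^R\|g\|^2\|x\|^2$, from which a local ISS bound around $\breve g=g$ follows at once.

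To upgrade this to an almost global statement, I would invoke the almost-ISS framework of \cite{Angeli_TAC2011}, treating $\mathcal{L}_1$ as an almost-ISS Lyapunov function whose strict dissipation holds everywhere on $\mathbb{S}^2_g$ except at the single antipodal point $\breve g=-g$, where both $\|g\times\breve g\|^2$ and the input channel $(g\times\breve g)^\top\Upsilon(t)x$ vanish simultaneously. The required ingredients are: (i) almost global asymptotic stability of $\breve g=g$ and the repeller (instability) of $\breve g=-g$ for the unforced flow, both established in Lemma \ref{lem1}; (ii) the local ISS estimate just derived; and (iii) the fact that $\{-g\}$ is an isolated, measure-zero repeller, so its stable manifold has zero Lebesgue measure. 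Assembling these through the almost-ISS Lyapunov characterization then produces bounds of the form $\|\breve g(t)-g\|\le\beta(\|\breve g(0)-g\|,t)+\gamma(\|x\|_\infty)$ for every initial condition outside a set of measure zero.

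The main obstacle is the degeneracy at the antipodal equilibrium: since the dissipation rate $\mathcal{L}_1(2\|g\|^2-\mathcal{L}_1)$ collapses as $\breve g\to-g$, no globally valid ISS-Lyapunov inequality of the implication form can hold there, and a crude bound would even suggest that a persistent input might trap trajectories near $-g$. Overcoming this requires exploiting that $-g$ is a repeller with a measure-zero stable manifold and verifying carefully that the hypotheses of the characterization in \cite{Angeli_TAC2011} are met---in particular uniform dissipation on compact subsets of $\mathbb{S}^2_g\setminus\{-g\}$---so that the single degenerate point does not obstruct the almost global ISS estimate.
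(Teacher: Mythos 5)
Your overall strategy coincides with the paper's: the same Lyapunov function $\mathcal{L}_1=\frac{1}{2}\|g-\breve g\|^2$, the same dissipation computation $\dot{\mathcal{L}}_1=-k^R\|g\times\breve g\|^2 + (g\times\breve g)^\top\Upsilon(t)\,x$, and the same appeal to the almost-global ISS framework of \cite{Angeli_TAC2011}. Your intermediate estimate is, if anything, cleaner than the paper's: the identity $\|g\times\breve g\|^2=\mathcal{L}_1(2\|g\|^2-\mathcal{L}_1)$ is correct and yields a genuine local ISS inequality on the hemisphere $g^\top\breve g\ge 0$, whereas the paper settles for the global ultimate-boundedness bound $\dot{\mathcal{L}}_1\le-2k^R\|g\|^2\mathcal{L}_1+c_1+c_2\|x\|$ with an additive constant $c_1$ (which is what \cite[Proposition 3]{Angeli_TAC2011} actually asks for).

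The gap is in the upgrade to the almost-global statement. The result you invoke requires, besides compactness (A0) and $0$-input almost global asymptotic stability (A1), that the undesired equilibrium be \emph{exponentially} unstable, i.e., that the linearization of the unforced flow at $\breve g=-g$ have eigenvalues with positive real part (condition A2 --- this is the whole point of that reference, which concerns robustness in the presence of exponentially unstable isolated equilibria). You list as an ingredient the ``repeller (instability) of $\breve g=-g$ established in Lemma \ref{lem1}'' together with a measure-zero stable manifold, but Lemma \ref{lem1} only gives Lyapunov (Chetaev-type) instability via $\dot{\mathcal{L}}_2\ge 0$; that is strictly weaker and does not exclude the failure mode you yourself describe, namely a degenerate repeller being trapped by arbitrarily small persistent inputs. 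The paper closes exactly this hole by linearizing at $-g$: perturbing $\tilde R=\mathcal{R}(\pi,\bar u)\exp([\zeta]_\times)$ with $\bar u\perp g$ and setting $z=[\zeta]_\times g$, it obtains $\dot z=k^R M z$ with $M=\text{diag}(\|g\|^2,\|g\|^2,0)$, whose two positive eigenvalues verify A2. Your sketch never performs this computation, and your substitute hypothesis (``uniform dissipation on compact subsets of $\mathbb{S}^2_g\setminus\{-g\}$'') is not the condition the theorem needs. Adding the linearization at the antipodal equilibrium would make your argument complete; without it the conclusion does not follow from the cited framework.
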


\begin{proof}
    From the fact that the system \eqref{csc_sys2} is globally exponentially stable, it follows that the state $x$ belongs to a compact set $\mathcal{A} \subset \mathbb{R}^{3(n+2)}$. This, together with the fact that $\breve g$ belongs to a compact manifold $\mathbb{S}_g^2$, ensures that the system \eqref{csc_sys1}, subject to the bounded inputs $x$, evolves on the compact manifold $\mathbb{S}_g^2\times\mathcal{A}$. Consequently, condition A0, given in \cite{Angeli_TAC2011}, is fulfilled. Moreover, considering the positive definite function  $\mathcal{L}_1$ and system \eqref{csc_sys1} with $x=0$, condition A1, given in \cite{Angeli_TAC2011}, is also fulfilled. To check condition A2, given in \cite{Angeli_TAC2011}, let us derive the first-order approximation of system \eqref{csc_sys1} with $x=0$. To do so, we apply small perturbation to $\breve g$ around the undesired equilibrium $-g$ through the attitude estimation error $\tilde R$ by letting $\tilde{R} = \mathcal{R}_\alpha(\pi, \bar{u}) \exp([\zeta]_\times)$, where $\bar{u} \perp g$ and $\zeta \in \mathbb{R}^3$ sufficiently small. Using the approximation $\exp([\zeta]_\times) \approx I_3 + [\zeta]_\times$ for sufficiently small $\zeta$, we have $\breve{g} = -g + [\zeta]_\times g$. Letting $z:= [\zeta]_\times g$, one can see $z$ as the small perturbation applied to $\breve g$ around the undesired equilibrium $-g$. By neglecting the cross terms and using the fact that $[u\times v]_\times =vu^\top-uv^\top$ for every $u, v \in \mathbb{R}^3$, one has
    \begin{align}
        \dot z &= k^R M z \label{lti_u},
    \end{align}
    where $M:=\text{tr}(g g^\top) I_3-gg^\top=\text{diag}\left(||g||^2, ||g||^2, 0\right)$. This shows that the linearized dynamics of \eqref{csc_sys1} with $x=0$ have two identical positive eigenvalues $||g||$. Thus, condition A2 is also fulfilled. Now, consider the positive definite function  $\mathcal{L}_1$ whose time derivative along the trajectories \eqref{csc_sys1} is given by
    \begin{align}
        \dot{\mathcal{L}}_1&=-k^R||g \times \breve g||^2-g^\top[\breve g]_\times \Upsilon(t) x\nonumber\\
        &=-k^R||[g]_\times \left(g-\bar g\right)||^2-g^\top[\breve g]_\times \Upsilon(t) x\nonumber\\
        &=-k^R ||g||^2||g-\bar g||^2-k^R\Big(||g||^2-2||g||^2g^\top\bar g\nonumber\\
        &~~~~~~~~~~~~~~~+\left(\bar g^\top g\right)^2\Big)-g^\top [\breve g]_\times \Upsilon(t) x. \label{equ_iss}
    \end{align}
    The last equality was obtained using the fact that $[u]_\times^2=-||u||^2I_3+uu^\top$ for every $u \in \mathbb{R}^3$. Next, let $c_1$ and $c_2$ be a constant scalars such that $-k^R\big(||g||^2-2||g||^2g^\top\bar g+\left(\bar g^\top g\right)^2\big)\leq c_1$ and $||g^\top[\breve g]_\times \Upsilon(t)|| \leq c_2$. One has
    \begin{align}
        \dot{\mathcal{L}}_1\leq-2k^R||g||^2\mathcal{L}_1+c_1+c_2||x||. \label{equ_iss}
    \end{align}
    It follows from \eqref{equ_iss} that system \eqref{csc_sys1} satisfies the ultimate boundedness property introduced in \cite[Proposition 3]{Angeli_TAC2011}. Therefore, according to \cite[Proposition 2]{Angeli_TAC2011}, one can conclude that system \eqref{csc_sys1} is almost globally ISS with respect to $\breve g=g$ and the input $x$. This completes the
proof of Lemma \ref{lemma:ISS}.
\end{proof}
Since the equilibrium $x=0$ for the system \eqref{csc_sys2} is UGES and the system \eqref{csc_sys1} with $x=0$ is AGAS and almost globally ISS with respect to $x$, one can conclude that the cascaded system \eqref{csc_sys1}-\eqref{csc_sys2} is AGAS. This completes the
proof of Theorem.

\bibliographystyle{IEEEtran}
\bibliography{References}

@ARTICLE{Brossard_SJ2019,
  author={Brossard, Martin and Barrau, Axel and Bonnabel, Silvère},
  journal={IEEE Sensors Journal}, 
  title={Exploiting Symmetries to Design EKFs With Consistency Properties for Navigation and SLAM}, 
  year={2019},
  volume={19},
  number={4},
  pages={1572-1579}}

@book{Chen1999,
  author    = {Chi-Tsong Chen},
  title     = {Linear System Theory and Design},
  edition   = {3rd},
  year      = {1999},
  publisher = {Oxford University Press},
  address   = {New York, NY}
}

@article{EuRoC_dataset,
  title={The EuRoC micro aerial vehicle datasets},
  author={Michael Burri and Janosch Nikolic and Pascal Gohl and Thomas Schneider and J{\"o}rn Rehder and Sammy Omari and Markus Achtelik and Roland Y. Siegwart},
  journal={The International Journal of Robotics Research},
  year={2016},
  volume={35},
  pages={1157 - 1163},
  url={https://api.semanticscholar.org/CorpusID:9999787}
}

@article{Markdahl_SCL2017,
title = {A geodesic feedback law to decouple the full and reduced attitude},
journal = {Systems \& Control Letters},
volume = {102},
pages = {32-41},
year = {2017},
issn = {0167-6911},
author = {Johan Markdahl and Jens Hoppe and Lin Wang and Xiaoming Hu}
}

@article{Lim2020ARO,
  title={A Review of Visual Odometry Methods and Its Applications for Autonomous Driving},
  author={Kai Li Lim and Thomas Br{\"a}unl},
  journal={ArXiv},
  year={2020},
  volume={abs/2009.09193}}

@article{Gui_2015,
author = {Jianjun Gui and Dongbing Gu and Sen Wang and Huosheng Hu},
title = {A review of visual inertial odometry from filtering and optimisation perspectives},
journal = {Advanced Robotics},
volume = {29},
number = {20},
pages = {1289--1301},
year = {2015},
publisher = {Taylor \& Francis}}

@INPROCEEDINGS{Brossard_2018,
  author={Brossard, Martin and Bonnabel, Silvère and Barrau, Axel},
  booktitle={2018 21st International Conference on Information Fusion (FUSION)}, 
  title={Invariant Kalman Filtering for Visual Inertial SLAM}, 
  year={2018},
  volume={},
  number={},
  pages={2021-2028}}

@ARTICLE{Yang_RAL2022,
  author={Yang, Yulin and Chen, Chuchu and Lee, Woosik and Huang, Guoquan},
  journal={IEEE Robotics and Automation Letters}, 
  title={Decoupled Right Invariant Error States for Consistent Visual-Inertial Navigation}, 
  year={2022},
  volume={7},
  number={2},
  pages={1627-1634}}

@INPROCEEDINGS{Wu_IROS2017,
  author={Wu, Kanzhi and Zhang, Teng and Su, Daobilige and Huang, Shoudong and Dissanayake, Gamini},
  booktitle={2017 IEEE/RSJ International Conference on Intelligent Robots and Systems (IROS)}, 
  title={An invariant-EKF VINS algorithm for improving consistency}, 
  year={2017},
  volume={},
  number={},
  pages={1578-1585}}

@ARTICLE{Zhang_RAL2017,
  author={Zhang, Teng and Wu, Kanzhi and Song, Jingwei and Huang, Shoudong and Dissanayake, Gamini},
  journal={IEEE Robotics and Automation Letters}, 
  title={Convergence and Consistency Analysis for a 3-D Invariant-EKF SLAM}, 
  year={2017},
  volume={2},
  number={2},
  pages={733-740}}

@ARTICLE{Pieter_TAC2023,
  author={van Goor, Pieter and Hamel, Tarek and Mahony, Robert},
  journal={IEEE Transactions on Automatic Control}, 
  title={Equivariant Filter (EqF)}, 
  year={2023},
  volume={68},
  number={6},
  pages={3501-3512}}

@article{barrau2018invariant,
  author    = {Barrau, Axel and Bonnabel, Silv{\`e}re},
  title     = {Invariant Kalman filtering},
  journal   = {Annual Review of Control, Robotics, and Autonomous Systems},
  volume    = {1},
  number    = {1},
  pages     = {237--257},
  year      = {2018}}

@INPROCEEDINGS{Scaramuzza_ICRA2018,
  author={Delmerico, Jeffrey and Scaramuzza, Davide},
  booktitle={2018 IEEE International Conference on Robotics and Automation (ICRA)}, 
  title={A Benchmark Comparison of Monocular Visual-Inertial Odometry Algorithms for Flying Robots}, 
  year={2018},
  volume={},
  number={},
  pages={2502-2509}}

@ARTICLE{Qin_TOR2018,
  author={Qin, Tong and Li, Peiliang and Shen, Shaojie},
  journal={IEEE Transactions on Robotics}, 
  title={VINS-Mono: A Robust and Versatile Monocular Visual-Inertial State Estimator}, 
  year={2018},
  volume={34},
  number={4},
  pages={1004-1020}}

@article{Leutenegger_OKVIS,
author = {Leutenegger, Stefan and Lynen, Simon and Bosse, Michael and Siegwart, Roland and Furgale, Paul},
title = {Keyframe-based visual–inertial odometry using nonlinear optimization},
year = {2015},
issue_date = {Mar 2015},
publisher = {Sage Publications, Inc.},
address = {USA},
volume = {34},
number = {3},
issn = {0278-3649},
journal = {Int. J. Rob. Res.},
month = mar,
pages = {314–334},
numpages = {21}
}

@inproceedings{Huang_ISER_2009,
  title={A First-Estimates Jacobian EKF for Improving SLAM Consistency},
  author={Guoquan Paul Huang and Anastasios I. Mourikis and Stergios I. Roumeliotis},
  booktitle={International Symposium on Experimental Robotics},
  year={2009}
}

@article{Huang_IJRR2010,
author = {Guoquan P. Huang and Anastasios I. Mourikis and Stergios I. Roumeliotis},
title ={Observability-based Rules for Designing Consistent EKF SLAM Estimators},

journal = {The International Journal of Robotics Research},
volume = {29},
number = {5},
pages = {502-528},
year = {2010}
}

@INPROCEEDINGS{Geneva_ICRA2020,
  author={Geneva, Patrick and Eckenhoff, Kevin and Lee, Woosik and Yang, Yulin and Huang, Guoquan},
  booktitle={2020 IEEE International Conference on Robotics and Automation (ICRA)}, 
  title={OpenVINS: A Research Platform for Visual-Inertial Estimation}, 
  year={2020},
  volume={},
  number={},
  pages={4666-4672}}

@article{Bloesch_IJRR2017,
author = {Michael Bloesch and Michael Burri and Sammy Omari and Marco Hutter and Roland Siegwart},
title ={Iterated extended Kalman filter based visual-inertial odometry using direct photometric feedback},

journal = {The International Journal of Robotics Research},
volume = {36},
number = {10},
pages = {1053-1072},
year = {2017}
}

@INPROCEEDINGS{Bloesch_IROS2015,
  author={Bloesch, Michael and Omari, Sammy and Hutter, Marco and Siegwart, Roland},
  booktitle={2015 IEEE/RSJ International Conference on Intelligent Robots and Systems (IROS)}, 
  title={Robust visual inertial odometry using a direct EKF-based approach}, 
  year={2015},
  volume={},
  number={},
  pages={298-304}}

@INPROCEEDINGS{Huai_IROS2018,
  author={Huai, Zheng and Huang, Guoquan},
  booktitle={2018 IEEE/RSJ International Conference on Intelligent Robots and Systems (IROS)}, 
  title={Robocentric Visual-Inertial Odometry}, 
  year={2018},
  volume={},
  number={},
  pages={6319-6326}}

@article{Huai_IJRR2022,
author = {Zheng Huai and Guoquan Huang},
title ={Robocentric visual–inertial odometry},

journal = {The International Journal of Robotics Research},
volume = {41},
number = {7},
pages = {667-689},
year = {2022}
}

@INPROCEEDINGS{Mourikis_ICRA2007,
  author={Mourikis, Anastasios I. and Roumeliotis, Stergios I.},
  booktitle={Proceedings 2007 IEEE International Conference on Robotics and Automation}, 
  title={A Multi-State Constraint Kalman Filter for Vision-aided Inertial Navigation}, 
  year={2007},
  volume={},
  number={},
  pages={3565-3572},
  keywords={Inertial navigation;Cameras;Simultaneous localization and mapping;Computational complexity;Solid modeling;Position measurement;Vectors;Large-scale systems;Motion measurement;Motion estimation},
  doi={10.1109/ROBOT.2007.364024}}

@article{Mingyang_2013,
author = {Mingyang Li and Anastasios I. Mourikis},
title ={High-precision, consistent EKF-based visual-inertial odometry},

journal = {The International Journal of Robotics Research},
volume = {32},
number = {6},
pages = {690-711},
year = {2013}
}

@ARTICLE{Scaramuzza_TR2017,
  author={Forster, Christian and Carlone, Luca and Dellaert, Frank and Scaramuzza, Davide},
  journal={IEEE Transactions on Robotics}, 
  title={On-Manifold Preintegration for Real-Time Visual--Inertial Odometry}, 
  year={2017},
  volume={33},
  number={1},
  pages={1-21},
  keywords={Smoothing methods;Optimization;Estimation;Real-time systems;Manifolds;Computational modeling;Jacobian matrices;Computer vision;sensor fusion;visual--inertial odometry (VIO)},
  doi={10.1109/TRO.2016.2597321}}

@article{Bucy1967,
title = {Global theory of the Riccati equation},
journal = {Journal of Computer and System Sciences},
volume = {1},
number = {4},
pages = {349-361},
year = {1967},
issn = {0022-0000},
author = {R.S. Bucy}
}

@article{HAMEL2017137,
title = {Position estimation from direction or range measurements},
journal = {Automatica},
volume = {82},
pages = {137-144},
year = {2017},
issn = {0005-1098},
author = {Tarek Hamel and Claude Samson}
}

@ARTICLE{Angeli_TAC2011,
  author={Angeli, David and Praly, Laurent},
  journal={IEEE Transactions on Automatic Control}, 
  title={Stability Robustness in the Presence of Exponentially Unstable Isolated Equilibria}, 
  year={2011},
  volume={56},
  number={7},
  pages={1582-1592},
  doi={10.1109/TAC.2010.2091170}}

@ARTICLE{Mahony_TAC2008,
  author={Mahony, Robert and Hamel, Tarek and Pflimlin, Jean-Michel},
  journal={IEEE Transactions on Automatic Control}, 
  title={Nonlinear Complementary Filters on the Special Orthogonal Group}, 
  year={2008},
  volume={53},
  number={5},
  pages={1203-1218},
  keywords={Passive filters;Costs;Measurement units;Noise level;Time varying systems;Additive noise;Filtering;Kinematics;Position measurement;Angular velocity;Attitude estimates;complementary filter;nonlinear observer;special orthogonal group},
  doi={10.1109/TAC.2008.923738}}

@ARTICLE{Wang_TAC2021,
  author={Wang, Miaomiao and Berkane, Soulaimane and Tayebi, Abdelhamid},
  journal={IEEE Transactions on Automatic Control}, 
  title={Nonlinear Observers Design for Vision-Aided Inertial Navigation Systems}, 
  year={2022},
  volume={67},
  number={4},
  pages={1853-1868}}

@article{barrau_arxiv2016,
      title={An EKF-SLAM algorithm with consistency properties}, 
      author={Axel Barrau and Silvere Bonnabel},
      year={2016},
      Journal={arXiv:1510.06263},
}

@ARTICLE{Pieter_TR2023,
  author={van Goor, Pieter and Mahony, Robert},
  journal={IEEE Transactions on Robotics}, 
  title={EqVIO: An Equivariant Filter for Visual-Inertial Odometry}, 
  year={2023},
  volume={39},
  number={5},
  pages={3567-3585}}

@article{VANGOOR_aut_2021,
title = {Constructive observer design for Visual Simultaneous Localisation and Mapping},
journal = {Automatica},
volume = {132},
pages = {109803},
year = {2021},
issn = {0005-1098},
author = {Pieter {van Goor} and Robert Mahony and Tarek Hamel and Jochen Trumpf}
}

\end{document}